\newtheorem{theorem}{Theorem}
\newtheorem{lemma}{Lemma}
\newtheorem{proposition}{Proposition}
\newtheorem{corollary}{Corollary}
\def \d {\partial}
\def\ben{\begin{equation}}
\def\een{\end{equation}}
\let\a=\alpha   \let\d=\delta \let\e=\varepsilon
\let\l=\lambda   \let\x=\xi
    \let\L=\Lambda
\let\C=\Chi
\let\pa=\partial
\def\be{\begin{equation}}
\def\ee{\end{equation}}
\def\beq{\begin{equation}}
\def\eeq{\end{equation}}
\def\ba{\begin{array}}
\def\ea{\end{array}}
\def\wtd{\widetilde}
\def\dalemb#1#2{{\vbox{\hrule height .#2pt
       \hbox{\vrule width.#2pt height#1pt \kern#1pt
               \vrule width.#2pt}
       \hrule height.#2pt}}}
\newcommand{\bea}{\begin{eqnarray}}
\newcommand{\eea}{\end{eqnarray}}
\newcommand{\tr}{{\rm tr} }
\newcommand{\LR}{\mathrm{LR}}
\newcommand{\cor}{\mathrm{cor}}
\newcommand*\bigcdot{\mathpalette\bigcdot@{.5}}
\newcommand*\bigcdot@[2]{\mathbin{\vcenter{\hbox{\scalebox{#2}{$\m@th#1\bullet$}}}}}
\renewcommand{\eqref}[1]{(\ref{#1})}
\def\R{{{\mathbb{R}}}}
\def\C{{{\mathbb{C}}}}
\def\B{{{\cal B}}}
\def\H{{{\cal H}}}
\def\P{{{\cal P}}}
\def\Q{{{\cal Q}}}
\def\bn{\boldsymbol{n}}
\def\bx{{\boldsymbol{x}}}
\def\by{{\boldsymbol{y}}}
\def\bv{{\boldsymbol{v}}}
\def\O{\mathcal{O}}
\DeclareMathOperator{\supp}{supp}
\DeclareMathOperator*\lowlim{\underline{lim}}
\DeclareMathOperator*\uplim{\overline{lim}}
\begin{document}
\frenchspacing

\title{Quantum Scrambling and \\
State Dependence of the Butterfly Velocity}
\author{Xizhi Han and Sean A. Hartnoll \\ {\it Department of Physics, Stanford University,}\\
{\it Stanford, California, USA}
}

\date{}
\maketitle


\begin{abstract}

Operator growth in spatially local quantum many-body systems defines a scrambling velocity. We prove that this scrambling velocity bounds the state dependence of the out-of-time-ordered correlator in local lattice models. We verify this bound in simulations of the thermal mixed-field Ising spin chain. For scrambling operators, the butterfly velocity shows a crossover from a microscopic high temperature value to a distinct value at temperatures below the energy gap.

\end{abstract} \maketitle

\noindent Strongly quantum many-body systems have been important in condensed matter \cite{stewart2001non, lee2006doping} and nuclear physics \cite{braun2007quest, shuryak2009physics} for some time and are likely to become increasingly important with the ongoing development of quantum information processing technology \cite{jaksch2005cold, bakr2009quantum, bloch2012quantum}. It is essential to understand the spatio-temporal dynamics of these systems in highly quantum regimes where semiclassical methods such as the Boltzmann equation are inapplicable.

Significant progress has been made recently by considering quantum scrambling in many-body systems \cite{sekino2008fast, hayden2007black, shenker2014black, shenker2014multiple, hosur2016chaos, roberts2017chaos}. Quantum scrambling arises when operator growth under Heisenberg time evolution redistributes local information to non-local degrees of freedom. It has been found that scrambling
in spatially local systems is characterized by both a rate and a velocity, e.g. \cite{Roberts:2014isa, maldacena2016bound, roberts2016lieb, Mezei:2016wfz}. These universal properties are manifested in the so-called out-of-time-ordered correlator (OTOC):
\begin{equation} \label{def:OTOC}
    \mathcal{C}(\bx, t; \rho) \equiv \tr \left(\rho\, [O_1(0, t), O_2(\bx, 0)]^\dagger [O_1(0, t), O_2(\bx, 0)]\right),
\end{equation}
defined for local operators $O_1, O_2$ in state $\rho$. The OTOC has been found to reveal a `light cone' spread of quantum information, with two state-dependent characteristics: the quantum Lyapunov exponent $\lambda$ and the butterfly velocity $v_B$. Just outside the light cone (or `butterfly cone') $|\bx| \gtrsim v_B t$ for $t > 0$, the OTOC grows as the front is approached according to \cite{khemani2018velocity, xu2018accessing}:
\begin{equation}
    \mathcal{C} (\bx, t; \rho) \sim e^{- \lambda(|\bx - \bx_0| / v_B - t)^{1 + p} / t^p}. \label{eq:OTOCform}
\end{equation}
In systems with many local degrees of freedom (e.g. large $N$ systems) the exponent $p = 0$ and the growth is exponential. This case is reminiscent of the classical butterfly effect. In spin lattice systems, generally $p > 0$, so that the front broadens as it spreads.

The butterfly velocity is a state-dependent speed of information propagation that is universally present in local systems, plausibly controlling important physical processes such as transport in strongly quantum regimes \cite{Blake:2016wvh, Gu:2016oyy, davison2017thermoelectric, patel2017quantum, Hartman:2017hhp,Lucas:2017ibu,bohrdt2017scrambling}. The state dependence means that the butterfly velocity is a more powerful probe of dynamics than the widely employed microscopic Lieb-Robinson velocity \cite{lieb1972finite}. In this work we will show that this state dependence (e.g. temperature dependence) is tied to the underlying quantum scrambling of operators.

In quantum field theories that describe a nontrivial (quantum critical) continuum limit of lattice systems, the scaling of the butterfly velocity with temperature is $v_B \sim T^{1-1/z}$ in the simplest cases \cite{roberts2016lieb, Blake:2016wvh}. The dynamical critical exponent $z$ describes the relative scaling of space and time. In this work we will characterize the butterfly velocity in general lattice models, away from critical points and without a large $N$ limit. We will obtain the temperature dependence of the butterfly velocity in quantum spin systems, extending previous infinite temperature results \cite{leviatan2017quantum, xu2018accessing}. The temperature dependence of scrambling in classical spin systems has been recently discussed in \cite{2018arXiv180802054}.

In a spatially local system the growth of operators determines a `scrambling velocity' $v_S$, defined in (\ref{eq:vs}) below. Our first result (\ref{ineq:bfinal}) states that the change of the velocity-dependent Lyapunov exponent --- defined shortly in (\ref{def:lyapunov}) --- with temperature is bounded by the scrambling velocity. This result is rigorous for one-dimensional systems and plausibly true more generally.
We verify the bound in numerical simulations of the mixed-field Ising model, focusing on the temperature dependence of the butterfly velocity. In Fig. \ref{fig:vba} below we see that the non-interacting transverse field model has a temperature-independent butterfly velocity whereas the velocity is temperature-dependent for the interacting mixed field models. In these curves, the butterfly velocity crosses over from a microscopic infinite-temperature value to a low-temperature value. The temperature scale of the crossover is set by the energy gap.

\section*{Three velocities from locality} \label{sec:velocity}

It will be crucial to understand three different velocities that characterize spatially local quantum systems. Our results will tie these velocities together. The velocities emerge in any lattice $\L$ of spins (or fermions) with a local Hamiltonian
\begin{equation} \label{def:H}
    H = \sum_{\bx \in \L} h_{\bx},
\end{equation}
where $h_{\bx}$ are operators localized near lattice site $\bx$. Translation symmetry is not required.

\subsection*{Lieb-Robinson velocity}

The Lieb-Robinson velocity defines an emergent `light-cone' causality from local dynamics on a lattice \cite{lieb1972finite}. It is a state-independent, microscopic velocity set by the magnitude of couplings in the Hamiltonian, and is insensitive to operator growth or lack thereof.

A convenient and powerful definition of $v_\LR$ is in terms of space-time rays. That is, consider an operator $O_2$ located along the ray $\bx = v t \bn$ (here $\bn$ is a unit vector). At large times we can introduce a velocity-dependent exponent $\lambda(v)$ that determines the growth or decay of the norm of the commutator along the ray, $\|[O_1(0, t), O_2(v t \bn, 0)]\| \sim e^{\lambda(v) t}$. Here $O(\bx, t)$ denotes $O$ translated by a lattice vector $\bx$ in space and a time $t$ with Heisenberg evolution, and $\|\cdot\|$ is the operator norm. The causal light cone defined by $v_\text{LR}$ is such that
for all $v > v_{\LR}$ the norm decays exponentially at late times, so that $\lambda(v) < 0$. Therefore we can define $v_\text{LR}$ as the largest velocity such that the norm does not decay along a ray:
\begin{equation} \label{def:vLR}
    v_{\LR} \equiv \sup \left\{v : \lim_{t \to \infty} \frac{1}{t} \ln \|[O_1(0, t), O_2(v t \bn, 0)] \| \geq 0\right\}.
\end{equation}
We shall not keep the dependence on direction $\bn$ and operators $O_1, O_2$ explicit.

For any $v > v_\LR$ there are ($v$-dependent) constants $\xi_\LR, C_\LR > 0$ such that for all $t, x > 0$,
\begin{equation} \label{ineq:LR}
    \|[O_1(0, t), O_2(x \bn, 0)] \| \leq C_\LR \|O_1\| \|O_2\| e^{(v t - x) / \xi_{\LR}}.
\end{equation}
Intuitively, inequality (\ref{ineq:LR}) states that for $v > v_\text{LR}$, the norm $\|[O_1(0, t), O_2(x \bn, 0)]\|$ is exponentially small outside the ray $x = v t$, with a tail of length $\xi_\LR(v)$.

\subsection*{Butterfly velocity}

The butterfly velocity is defined analogously to the Lieb-Robinson velocity, but using the OTOC instead of the operator norm of the commutator \cite{shenker2014black, Roberts:2014isa}. It therefore depends on the quantum state $\rho$.

The `velocity-dependent Lyapunov exponent' is defined by the late time growth or decay of the OTOC along a ray \cite{khemani2018velocity}:
\begin{equation} \label{def:lyapunov}
    \lambda(\bv; \rho) \equiv \lim_{t \to \infty} \frac{1}{t} \ln \mathcal{C}(\bv t, t; \rho) \,.
\end{equation}
Analogously to the Lieb-Robinson case, the butterfly velocity can now be defined as
\begin{equation} \label{def:vB}
    v_B(\rho) \equiv \sup \left\{v : \lambda(v \bn; \rho) \geq 0\right\}\,,
\end{equation}
which is state-dependent. The operator norm bounds the OTOC and hence $0 \leq v_B(\rho) \leq v_\LR$.

\subsection*{Scrambling velocity}

The Lieb-Robinson bound (\ref{ineq:LR}) implies that the size of an operator can grow at most polynomially in time (as $t^d$ in a $d$-dimensional system). In contrast, the growth can be exponential without spatial locality, such as in SYK models \cite{Maldacena:2017axo, Roberts:2018mnp, Qi:2018bje}. Operator growth under Heisenberg evolution in quantum systems with a local Hamiltonian will therefore define another velocity. We will call this the `scrambling velocity' $v_S$. For example, in strongly scrambling models, such as random unitary circuits \cite{von2018operator, nahum2018operator, khemani2018operator, rakovszky2018diffusive}, generic operators quickly grow into a superposition of product operators with radius $\sim v_\LR t$. In this case $v_S = v_\LR$.

More precisely, we define the scrambling velocity as follows. Given local operators $O_1$ and $O_2$, the commutator $[O_1(0, t), O_2(\bx, 0)]$ will grow along the ray $\bx = \bv t$. We are interested in the growth of the operator itself rather than its norm or OTOC. Let $R(\bx, t)$ be the radius of support of the commutator\footnote{The radius of an operator $O$ is the minimal distance $R$ such that $O$ is supported in a ball (centered at an arbitrary site) of radius $R$. Throughout the main text `support' should be understood as up to an exponentially decaying tail. Exponential tails are discussed in detail in the appendices.} and define
\begin{equation} \label{eq:vs}
    v_S(\bv) \equiv \lim_{t \to \infty} \frac{R(\bv t, t)}{t}.
\end{equation}
This is a velocity-dependent velocity because the growth of the operator can depend on the ray that we follow, just like the exponents in (\ref{def:vLR}) and (\ref{def:lyapunov}) above. This operator growth is illustrated in Fig.\,\ref{fig:RR}.

\begin{figure}[!ht]
    \centering
    \includegraphics[width = 0.65\textwidth]{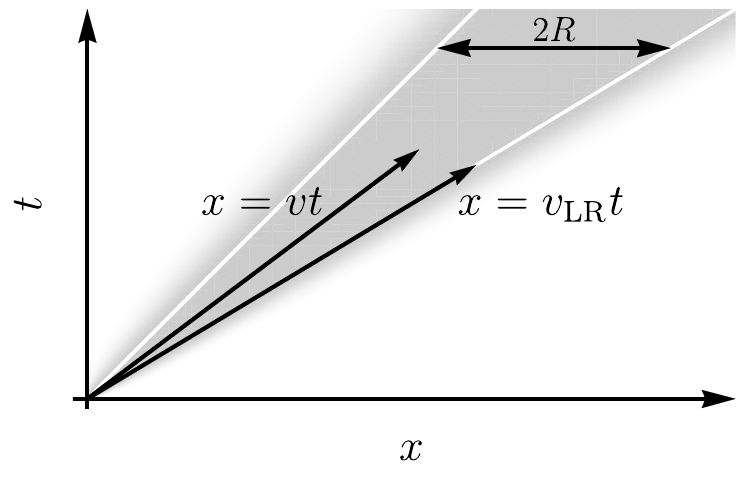}
    \caption{{\bf Operator growth along a ray:} Schematic plot showing the definition of $R(\bv t, t)$. The shaded region shows the radius of support of $O \equiv [O_1(0, t), O_2(\bx, 0)]$ along the ray $\bx = \bv t$. $R$ is the radius of the support up to an exponential tail. Because of the Lieb-Robinson bound for $O_1(0, t)$ and that $O_2(\bx, 0)$ sits on the line $\bx = \bv t$, the support contains the ray $\bx = \bv t$ and is within the Lieb-Robinson cone. \label{fig:RR}}
\end{figure}

In the random circuit, let $O_1$ and $O_2$ be two single-site operators. Inside the Lieb-Robinson cone, i.e. for $|\bx| \leq v_\LR t$, the commutator $[O_1(0, t), O_2(\bx, 0)]$ has the same support as $O_1(0, t)$ so $R(\bx, t) = v_{\LR} t$ and $v_S(\bv) = v_\LR$ for $|\bv| \leq v_{\LR}$. For general systems and for $|\bv| \leq v_{\LR}$ we expect that $0 \leq v_S(\bv) \leq v_{\LR}$. A proof of this statement, along with more precise definitions and technical details, is collected in the appendices.

The definition (\ref{eq:vs}) also captures the absence of scrambling in non-interacting theories. A non-interacting field obeys $\phi(\bx, t) = \int d \by\, f(\by,\bx; t) \phi(\by, 0)$, for some function $f(\by,\bx; t)$. Although the support of the operator $\phi(\bx, t)$ spreads out as $t$ increases, it remains a superposition of local operators. Consider the conjugate pair $(\phi, \pi)$. It follows that $[\phi(0, t), \pi(\bx, 0)] = i f(\bx,0; t)$. This is a $c$-number and its support has radius $R(\bx, t) = 0$. Hence $v_S(\bv) = 0$ for any $\bv$. 

Even in non-interacting theories, however, more general operators --- such as a pair of entangled quasiparticles moving in opposite directions ---  can have a nonzero scrambling velocity according to the definition (\ref{eq:vs}). Relatedly, simple operators in weakly interacting theories need not have a small scrambling velocity. In this work we will mostly be interested in strongly scrambling systems. The bound we obtain will not, in general, usefully constrain weakly scrambling dynamics.

\section*{Scrambling bounds the state dependence of the OTOC} \label{sec:bound}

In the following subsections we prove a bound on the temperature dependence of the velocity-dependent Lyapunov exponent (\ref{def:lyapunov}), in one spatial dimension. We also make an argument that an analogous result holds in higher dimensions. Namely:
\begin{equation}
    |\partial_{\beta} \l(\bv; \rho)| \leq \frac{2 h}{a} \Big(v_S(\bv) - (\xi + \xi_\text{LR}) \l(\bv; \rho)\Big) \,, \label{ineq:bfinal}
\end{equation}
where $\beta$ is the inverse temperature, $a$ the lattice spacing, $\xi$ the correlation length, $\xi_\LR$ the microscopic lengthscale in (\ref{ineq:LR}), essentially the interaction range, and $h \equiv 2 \sup_{\bx \in \L} \|h_{\bx}\|$ for the Hamiltonian in (\ref{def:H}).
The content of (\ref{ineq:bfinal}) is that the change with temperature of the Lyapunov exponent along a ray is bounded by the rate of growth of the commutator along the ray. Zooming in on the butterfly light cone $v \sim v_B$, this bound implies that the growth of the commutator at the butterfly light cone bounds the change of characteristics such as the butterfly velocity. As (for example) the temperature is increased, these growing operators are `activated' and contribute to scrambling.

A generalization, with full proof in the appendices, is as follows: For any Gibbs state $\rho = e^{-\sum_i \mu_i C^i} / \tr\,e^{-\sum_i \mu_i C^i}$ with mutually commuting conserved charges $C^i$, where $\mu_i \in \R$ and $C^i = \sum_{\bx \in \L} c^i_\bx$ is a sum of local operators, then
\begin{equation}
    \left|\frac{\partial\lambda(\bv; \rho)}{\partial \mu_i}\right| \leq \frac{2 c^i}{a} \Big(v_S(\bv) - (\xi + \xi_\LR) \lambda(\bv; \rho)\Big). \label{boundmu}
\end{equation}
 The definition of $c^i > 0$ is similar to $h$ above: $c^i \equiv 2 \sup_{\bx \in \L} \|c^i_\bx\|$.
 
\subsection*{Outline of proof in one dimension}

The following gives an outline of the proof of (\ref{ineq:bfinal}). The logic is straightforward, but technical complications arise, for example, due to the fact that time evolution generates exponentially decaying tails in space for local operators, so one cannot assume that local operators have strictly finite support. These technical points are addressed in the appendices.

Let $\rho = e^{- \beta H} / \tr\, e^{- \beta H}$ be a thermal state with inverse temperature $\beta$ and correlation length $\xi$. The steps will be as follows: {\it (i)} Differentiate the OTOC with respect to the inverse temperature, {\it (ii)} show that the main contribution to this derivative is from operators inside the support of the commutator, and {\it (iii)} balance the growth of this contribution, due to the growing size of the commutator along a ray, with the growth or decay of the OTOC. We now outline these steps.

\begin{enumerate}[(i)]

\item {\it Temperature derivative of the OTOC.} Taking the derivative of the OTOC (\ref{def:OTOC}) with respect to the inverse temperature gives
\begin{equation}
    \partial_\beta \mathcal{C}(\bx, t; \rho) = - \tr (\rho\,\wtd{H} O^\dagger O) = - \tr (\wtd{H} \sqrt{\rho} O^\dagger O \sqrt{\rho}) \,, \label{eq:deriv}
\end{equation}
where $O \equiv i [O_1(0, t), O_2(\bx, 0)]$ and $\wtd{H} \equiv H - \tr(\rho H)$ is the Hamiltonian with thermal expectation value subtracted out.

The Hamiltonian $H$ in (\ref{def:H}) is written as a sum of local terms. We can split this sum up into terms that are inside and outside the support of the commutator $O$ (for some location $\bx$ and time $t$). As in the definition of $v_S$, let $O$ be roughly supported in a ball of center $\by_0$ and radius $R$. Then
\begin{equation}
    \wtd{H} = \sum_{|\by - \by_0| \leq R + \delta} \wtd{h}_{\by} + \sum_{|\by - \by_0| > R + \delta} \wtd{h}_{\by}, \label{eq:sumH}
\end{equation}
where $\delta > 0$ can take any value. As for $\wtd{H}$, $\wtd{h}_{\by} \equiv h_\by - \tr(\rho h_\by)$. This decomposition can now be inserted into the derivative (\ref{eq:deriv}).

\item {\it Dominance by operators inside the commutator.} We first bound the contribution from outside of the support of the commutator, with $|\by - \by_0| > R + \delta$ in (\ref{eq:sumH}). Due to the thermal correlation length $\xi$, the connected correlation function of $\wtd{h}_\by$ with $O^\dagger O$ will decay exponentially in the distance $|\by - \by_0|$. Thus,
for some constant $C > 0$ and all $\by \in \L$ such that $|\by - \by_0| > R$:
$|\tr (\wtd{h}_\by \sqrt{\rho}\, O^\dagger O \sqrt{\rho})| \leq C \|\wtd{h}_\by\| \|O\|^2 e^{(R - |\by - \by_0|) / \xi}$.
Summing over $|\by - \by_0| > R + \delta$, the contribution to (\ref{eq:deriv}) from operators outside of the commutator is bounded by
\begin{equation}
    \sum_{|\by - \by_0| > R + \delta} \left| \tr (\wtd{h}_{\by} \sqrt{\rho} O^\dagger O \sqrt{\rho})\right| \leq C'  \sup_{\by \in \L} \|\wtd{h}_\by\| \|O\|^2 e^{- \delta / \xi} \label{ineq:b2} \,.
\end{equation}
In $d$ spatial dimensions and for $R + \delta \gg \xi$, $C' \sim C \xi (R + \delta)^{d-1} / a^d$ from doing the sum over $|\by - \by_0| > R + \delta$ ($a$ is the lattice spacing). There is a technical subtlety in obtaining (\ref{ineq:b2}) due to the need to commute factors of $\sqrt{\rho}$ through $\wtd{h}_\by$; we deal with this in the appendices.

We can similarly bound the contribution to (\ref{eq:deriv}) from operators inside the support of the commutator, with $|\by - \by_0| \leq R + \delta$. As in the main text,
define the maximal local coupling in the Hamiltonian as
\begin{equation}
    h \equiv 2 \sup_{\by \in \L} \|h_{\by}\| \,. \label{def:hmeth}
\end{equation}
Note that $\|\wtd{h}_{\by}\| \leq 2 \|h_{\by}\|$, so that
\begin{equation}
    |\tr (\wtd{h}_\by \sqrt{\rho}\, O^\dagger O \sqrt{\rho})| \leq \|\wtd{h}_{\by}\| \, \tr (\rho\, O^\dagger O) \leq h \, \mathcal{C}(\bx, t; \rho). \label{ineq:b1}
\end{equation}
Notice that the inequality still goes through if we take
\begin{equation}
    h = \sup_{\,\by \in \L} \frac{|\tr (\wtd{h}_\by \sqrt{\rho}\,O^\dagger O \sqrt{\rho})|}{\tr (\rho\,O^\dagger O)}\,.
\end{equation}
Now, the number of terms in the first sum of (\ref{eq:sumH}) is $V_{R + \delta}$, the number of lattice points in a ball of radius $R + \delta$. Therefore, putting together (\ref{ineq:b2}) and (\ref{ineq:b1}), we can bound the derivative (\ref{eq:deriv}) by:
\begin{equation}
    |\partial_\beta \mathcal{C}(\bx, t; \rho)| \leq V_{R + \delta} \, h \, \mathcal{C}(\bx, t; \rho) + C'  h \, \|O\|^2 e^{- \delta / \xi}. \label{ineq:crude}
\end{equation}
We will see that in a certain kinematic limit, the final term in (\ref{ineq:crude}), from outside of the support of the commutator, is small compared to the other terms.

\item {\it Bounding the derivative by the growth of the commutator.} 
The inequality (\ref{ineq:crude}) simplifies at late times along a ray $\bx = \bv t$. From the definition (\ref{def:lyapunov}) of the velocity-dependent Lyapunov exponent, $\mathcal{C}(\bv t, t; \rho) \sim e^{\l(\bv; \rho) t}$ as $t \to \infty$. We furthermore set $\delta = (- \xi \l(\bv; \rho) + \epsilon) t > 0$, with $\epsilon > 0$ a small number. This choice is such that the final term in (\ref{ineq:crude}) decays exponentially faster than the others as $t \to \infty$. This final term is therefore negligible in this limit. In this way, as $t \to \infty$ the following inequality is obtained:
\begin{equation}
    |\partial_{\beta} \l(\bv; \rho)| \leq h \lim_{t \to \infty} \frac{V_{R - \xi \l(\bv; \rho) t}}{t}. \label{ineq:b3}
\end{equation}
This expression bounds the temperature dependence of the Lyapunov exponent in terms of the late time growth of the commutator along a ray. The late time limit in (\ref{ineq:b3}) is manifestly finite in one spatial dimension, $d = 1$. In one dimension at large radii $V_r \approx 2 r / a$, where $a$ is the lattice spacing. In this case, the operator growth in (\ref{ineq:b3}) is precisely given by the scrambling velocity defined in (\ref{eq:vs}). Thus, in terms of the scrambling velocity  we obtain (A more rigorous treatment in the appendices, allowing for exponential tails in the support, shows that $\xi \to \xi + \xi_\text{LR}$. We include this shift in the following statement of the bound.)
\begin{equation}
    |\partial_{\beta} \l(\bv; \rho)| \leq \frac{2 h}{a} \Big(v_S(\bv) - (\xi + \xi_\text{LR}) \l(\bv; \rho)\Big) \,. \label{ineq:bfinalmeth}
\end{equation}
\end{enumerate}

\subsection*{Generalization to higher dimensions}

In higher dimensions, $V_r$ will scale as $r^d$ for $d > 1$ and hence the late time bound (\ref{ineq:b3}) is always trivially true. However, we conjecture that the bound stated in (\ref{ineq:bfinal}) holds for arbitrary dimensions, based on a Lieb-Robinson type argument. One way of understanding the Lieb-Robinson bound is to expand
\begin{equation}
    O_1(t) = \sum_{n = 0}^\infty \frac{( i t [H, \,\cdot\,])^n}{n!} O_1 = O_1 + i t [H, O_1] - \frac{t^2}{2} [H, [H, O_1]] + \ldots, \label{eq:expand}
\end{equation}
and observe that in the expansion, for $[O_1(0, t), O_2(\bx, 0)]$ to be nonzero, a commutator sequence of local terms in $H$ connecting $O_1$ and $O_2$ is necessary, which starts at order $n \approx |\bx| / R_H$ where $R_H$ is the range of local terms in $H$. For such a high order term to be significant, $t$ has to be later than $|\bx| / (R_H h)$ and this gives an estimate of $v_\LR \approx R_H h$. 

In a proof along these lines it is intuitively clear that outside the Lieb-Robinson cone $|\bx| = v_\LR t$, the leading contributions to the commutator $[O_1(0, t), O_2(\bx, 0)]$ come from $O_1$ taking commutators with local terms in $H$ (as shown in (\ref{eq:expand})), via the shortest path from the origin to $\bx$. Hence it is plausible that the operator $[O_1(0, t), O_2(\bx, 0)]$, for $|\bx| \gg v_\LR t$, is approximately one-dimensional, along the line connecting $0$ and $\bx$. Then the bound (\ref{ineq:bfinal}) is still expected to be true, although possibly with a larger `renormalized' $h$. 

\section*{Temperature dependence of the butterfly velocity} \label{sec:vb}

\subsection*{Numerical results on the mixed field Ising chain}

To motivate the general discussion of butterfly velocities, it will be useful to have some explicit numerical results for the temperature dependence of the butterfly velocity at hand. To this end we have studied the mixed field Ising chain with Hamiltonian
\begin{equation} \label{def:HIsing}
    H = - J \sum_{i = 1}^{N - 1} Z_i Z_{i + 1} + h_X \sum_{i = 1}^N X_i + h_Z \sum_{i = 1}^N Z_i,
\end{equation}
where $X_i$, $Y_i$ and $Z_i$ are Pauli matrices at site $i$. Numerics is done with a straightforward generalization of the Matrix Product Operator (MPO) method discussed in \cite{leviatan2017quantum, xu2018accessing} to finite temperatures. Some analytic results on OTOCs in the transverse field model ($h_Z = 0$) can be found in \cite{PhysRevB.97.144304}. In numerics we will have $N = 25$. More details can be found in the appendices. Results for the temperature dependence of the butterfly velocity for Pauli $Z$ operators are shown in Fig. \ref{fig:vba}.

\begin{figure}[!ht]
    \centering
    \includegraphics[width = 0.8\textwidth]{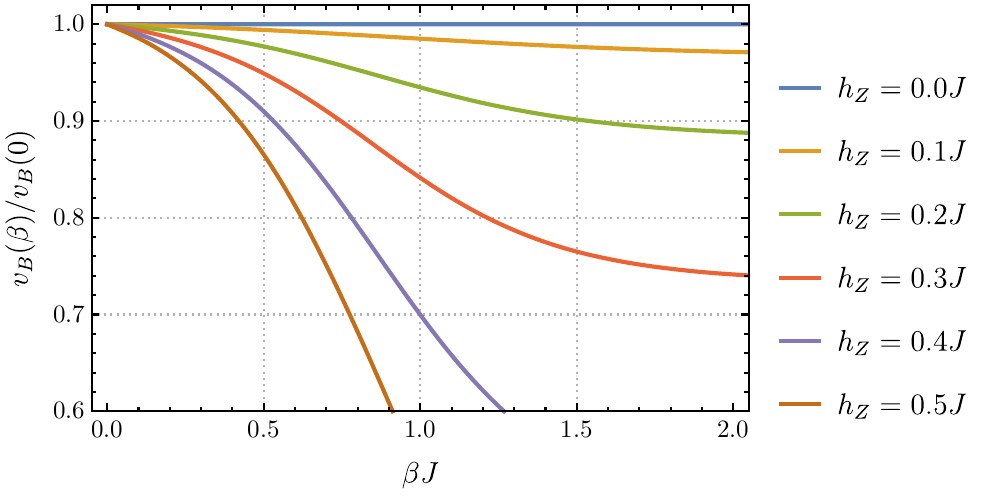}
    \caption{{\bf Temperature-dependent butterfly velocity in the mixed field Ising chain} (\ref{def:HIsing}) with $h_X = 1.05 J$ and different $h_Z$. The inverse temperature is denoted as $\beta$. The model with $h_Z=0$ is dual to free fermions and has a temperature-independent butterfly velocity. The appendices contain more details about numerics and error estimates.}
    \label{fig:vba}
\end{figure}

The numerical results in Fig.\,\ref{fig:vba} exhibit the behavior advertised in the introduction, and which we will understand in detail below. The transverse field Ising model ($h_Z = 0$) is dual to free fermions via a Jordan-Wigner transformation. The longitudinal field $h_Z$ introduces interactions. We expect interactions to induce scrambling dynamics and hence a nontrivial temperature dependence of the butterfly velocity, and this is what the figure shows.

The temperature-independent butterfly velocity of the transverse field model deserves some elaboration. There are two points to make. Firstly, the transverse field model is special in its duality to a {\it non-interacting} integrable system, where $v_S = 0$ for the commutator of fermion creation and annihilation operators, for example. For interacting integrable systems, typically $v_S > 0$ and the butterfly velocity is state-dependent \cite{Gopalakrishnan:2018wyg}. Indeed, we have verified numerically that the butterfly velocity is temperature-dependent in such models. Interacting integrable systems are scrambling, even while they are not chaotic.

Secondly, in the transverse field model, Pauli $Z$'s in the spin frame are dual to nonlocal fermion chains by the Jordan-Wigner transformation. Due to this nonlocality, our inequality doesn't apply in the fermion frame. In fact, even local operators describing small numbers of quasiparticles in a non-interacting theory can have $v_S > 0$ by our definition because entangled pairs of quasiparticles moving in opposite directions technically lead to a linearly growing radius of support for the operator. We believe that it may be possible to overcome this technical complication in the future with an improved definition of the scrambling velocity, such that $v_S = 0$ for spatially separated but entangled non-scrambling operators. Indeed, we shall now argue that the butterfly velocity is temperature independent for all local operators in a non-interacting system.

In a non-interacting theory the propagation of quasiparticles is independent of the state they are propagating in, due to the absence of interactions between them. While the quasiparticles may have a nontrivial dispersion and hence temperature-dependent average velocity, any local operator includes modes of all wavevectors and, in particular, maximal velocity modes. Thus we expect $v_B$ is independent of the state. Therefore, the temperature-independence of the butterfly velocity observed in our numerics is indeed symptomatic of the non-interacting integrability of the system.

\subsection*{Bounding the butterfly velocity}

The temperature dependence shown in Fig.\,\ref{fig:vba} can be understood from the connections between the OTOC and scrambling velocity that we have described. The `light front' form (\ref{eq:OTOCform}) for the OTOC implies that the velocity-dependent Lyapunov exponent is
\begin{equation} \label{eq:lambdaform}
    \lambda(\bv; \rho) = -\lambda (v / v_B - 1)^{1 + p} \qquad \text{for} \qquad v \geq v_B \,.
\end{equation}
This precise form for $\lambda(\bv; \rho)$ is conveniently explicit, but the only qualitatively essential aspect for
our results is the presence of a `butterfly cone'. As we explained above, in general $\lambda$, $v_B$ and $p \geq 0$ are state-dependent.  Therefore, the $\pa_{\mu_i}$ derivative in (\ref{boundmu}) will act on each of these quantities. Substituting the specific form (\ref{eq:lambdaform}) for $\l(\bv; \rho)$ into (\ref{boundmu}), for $v > v_B$, leads to the following slightly complicated expression:
\begin{align} \label{ineq:lv}
     a \l (\Delta v)^{1 + p}\Big|\partial_{\mu_i} \ln \l +  \ln (\Delta v) \partial_{\mu_i} p &-  (1 + p) \frac{v/v_B}{\Delta v} \, \partial_{\mu_i} \ln v_B\Big| \nonumber \\
     &\leq 2 c^i \left[v_S(v)  + (\xi + \xi_\LR) \l (\Delta v)^{1 + p} \right],
\end{align}
where $\Delta v \equiv v / v_B - 1 > 0$ is a dimensionless measure of how far the velocity is outside the butterfly cone. A simple consequence of (\ref{ineq:lv}) follows, when there is no scrambling. Suppose that $v_S(v)=0$. In that case, taking $\Delta v \to 0^+$, the leading term on the left side of (\ref{ineq:lv}) is the last one. It follows that
\begin{equation}\label{eq:vsvb}
    v_S = 0 \;\; \Rightarrow \;\; \partial_{\mu_i} v_B = 0 \,.
\end{equation}
Hence $v_B$ is constant for operators that do not scramble. We noted above, however, that this result is not directly applicable to the transverse field Ising chain.

Increasing variation of $v_B$ with temperature is observed in Fig.\,\ref{fig:vba} as integrability is increasingly broken by turning on $h_Z$ in the mixed field Ising model. The crossover temperature in Fig.\,\ref{fig:vba} is set by the energy gap $\Delta$ (of order $J$ for $h_Z = 0.1 \sim 0.5 J$), as we now explain. Intuitively, one might expect $v_B$ to cease varying at temperatures $T \ll \Delta$. This is what is seen in the numerical data. We can argue for this by improving an aspect of the proof outlined previously. As we note there, the proof still goes through if we take $h$ in (\ref{ineq:bfinal}) to be instead given by
\begin{equation} \label{def:hh}
    h = \sup_{t > 0, \,\by \in \L} \frac{|\tr (\wtd{h}_\by \sqrt{\rho}\,O^\dagger O \sqrt{\rho})|}{\tr (\rho\,O^\dagger O)}\,,
\end{equation}
where $O \equiv i [O_1(0, t), O_2(\bv t, 0)]$ and $\wtd{h}_{\by} \equiv h_{\by} - \tr( \rho h_{\by})$. This is not an especially tractable expression in general, but it can be evaluated for a gapped system at zero temperature, where $\rho \equiv |0\rangle \langle 0|$. In that case $h = \sup_{\by \in \L} \langle 0 | \wtd{h}_{\by} | 0\rangle = 0$, where now $\wtd{h}_{\by} \equiv h_{\by} - \langle 0 | h_{\by} | 0 \rangle$. Hence in gapped systems at low temperatures, we may set $h \approx 0$ in the bound (\ref{ineq:bfinal}). It follows that  $\partial_{\beta} v_B \to 0$ when $T \to 0$ in a gapped system, consistent with the finite low temperature butterfly velocities seen in Fig.\,\ref{fig:vba}.

The numerical results in Fig.\,\ref{fig:dvb} substantiate the above argument, suggesting that $\partial_\beta v_B$ decays exponentially as $\beta \Delta \to \infty$. In Fig.\,\ref{fig:dvb} the bound has furthermore been written as a bound on the derivative of the butterfly velocity, and is found to be most constraining at intermediate temperatures and with strong scrambling, where it is within an order of magnitude of the true value.
\begin{figure}[!ht]
    \centering
    \includegraphics[width = 0.8\textwidth]{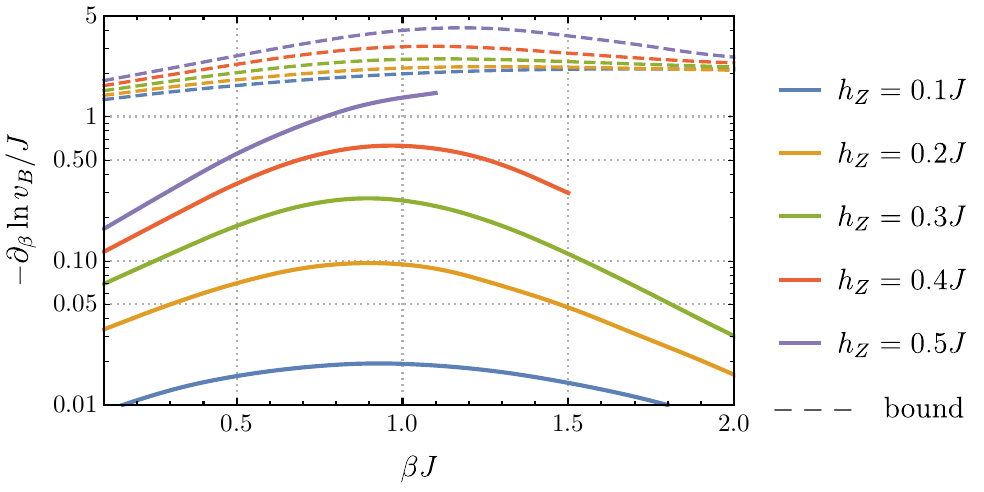}
    \caption{{\bf Bounding the temperature derivative of the butterfly velocity:} Temperature derivative of the butterfly velocity in mixed field Ising chains, with $h_X = 1.05 J$ and different $h_Z$ in (\ref{def:HIsing}). The inverse temperature is denoted as $\beta$. The bound (\ref{ineq:lv}) is shown as the dashed curves. In the bound $v_S$ is replaced by $3 J a$ ($a = 1$ is the lattice spacing), using the fact that $v_S \leq v$ for $v = 3 J a$ and $\xi_\LR = a$ in the Lieb-Robinson inequality (\ref{ineq:LR}), in the spin duality frame. Curves are cut off when estimated error is significant (see the appendices for more details).}
    \label{fig:dvb}
\end{figure}

Our bound combined together with numerics leads to a consistent picture of the temperature dependence of the butterfly velocity in chaotic spin systems with a gap $\Delta$. Stronger scrambling allows for stronger temperature dependence of $v_B$, which furthermore approaches a constant at $T \ll \Delta$. These facts explain the crossover features of the curves in Fig.\,\ref{fig:vba}. More quantitatively, the overall variation $v_B(\beta = 0) / v_B(\beta = \infty)$ can be bounded by integrating our bound from $\beta = 0$ to $\beta \Delta \sim 1$ (assuming that there are no intervening thermal phase transitions). For small $v_S(v)$, this integration can be done explicitly, leading to a
bound on the change in the butterfly velocity from infinite to zero temperature. For notational convenience let $v_S^B \equiv v_S(v_B)$.
At small $v_S^B$ one may take $\Delta v \sim (v_S^B/ v_B)^{1 / (1 + p)}$ in (\ref{ineq:lv}) and the leading term on the left hand side is again the final one, which integrates to
\begin{equation} \label{ineq:smallvs}
    \left|\ln \frac{v_B(\beta = \infty)}{v_B(\beta = 0)} \right| \lesssim \int_0^{1/\Delta} d \beta \, \frac{2 h v_B^{p / (1 + p)} [1 + (\xi + \xi_\LR) \l / v_B]}{a \l (1 + p)} \left(v_S^B\right)^{1 / (1 + p)},
\end{equation}
to leading order in $v^B_S \to 0$. Typically $v_B(\beta=0) \sim v_\text{LR}$. Schematically we can therefore write
\be\label{eq:schem}
v_B(T=0) \gtrsim v_\text{LR} \, e^{- \alpha v_S^\gamma / \Delta} \,.
\ee
Here $\alpha$ is a dimensionful constant, $\gamma$ a dimensionless constant and we have singled out the $v_S$ and $\Delta$ dependences. It follows that {\it (i)} as $v_S \to 0$, $\ln v_B$ can vary as a power $v_S^{\gamma}$ of the scrambling velocity, and {\it (ii)} if the gap $\Delta \to 0$, $v_B$ may approach zero at $T = 0$. Indeed, power law butterfly velocities $v_B \sim T^{1-1/z}$, with $z$ the dynamical critical exponent, are found in strongly chaotic gapless holographic models \cite{roberts2016lieb, Blake:2016wvh}.

\section*{Final comments}

In summary, we have shown how locality of quantum dynamics ties operator growth to the butterfly velocity. This connection arises because the growth of the spatial support of the commutator right outside the butterfly cone bounds the change of the butterfly velocity with e.g. temperature. The butterfly velocity is state-dependent and therefore gives a richer characterization of the finite temperature dynamics than is possible from the microscopic Lieb-Robinson velocity alone. We have demonstrated these ideas explicitly in numerical studies of quantum chaotic lattice models at finite temperature. Looking forward, we hope that the methods we have developed can be used to bound other important quantities that underpin quantum many-body systems, in particular the thermalization length and time, as well as transport observables such as the thermal diffusivity.

\section*{Acknowledgements}

It is a pleasure to acknowledge Jacob Marks for helping with numerics and Daren Chen for reading the proofs. We are grateful to Vedika Khemani and Xiao-Liang Qi for insightful comments on an earlier version. SAH is partially funded by DOE award de-sc0018134. XH is supported by a Stanford Graduate Fellowship. Computational work was performed on the Sherlock cluster at Stanford University, with the ITensor library for implementing tensor network calculations.

\bibliography{butterfly}
\bibliographystyle{ourbst}

\newpage

\appendix

\section*{Appendices}

This appendix contains six sections: section \ref{app:notation} sets up notations and backgrounds for discussions that follow. In section \ref{app:bound} we review the Lieb-Robinson, Araki and correlation length bounds used in our proof. Precise definitions for Lieb-Robinson, butterfly and scrambling velocities are given in section \ref{app:velocity} and we prove several inequalities regarding them. Section \ref{app:araki} collects technical lemmas for exponentially local operators and section \ref{app:proof} gives a rigorous proof of the general results. Details of numerical implementations and data analysis are presented in section \ref{app:numerics}.

\section{Notation} \label{app:notation}

In this section we introduce notations and concepts necessary for a rigorous proof of our result. The bound will be formulated for a lattice\footnote{Technically the infinite lattice should be thought as the limit of a sequence of increasing finite subsystems. We will not delve into subtleties related to this point.} $\L$ of spins in $d$ spatial dimensions, and rigorously proved for $d = 1$. There are isomorphic finite-dimensional Hilbert spaces $\H_\bx$ associated to each lattice site $\bx \in \L$ and denote $\B_\bx$ as the space of linear operators acting on $\H_\bx$. An operator $O$ is said to be supported on a subset $S \subset \L$ if $O \in \bigotimes_{\bx \notin S} \C I \otimes \bigotimes_{\bx \in S} \B_{\bx}$, i.e. $O$ is a sum of product operators that are identity outside $S$. The minimal set that $O$ is supported on is called the support of $O$, denoted as $\supp O$.\footnote{Note $\supp O = \emptyset$ if and only if $O = c I$ for some $c \in \C$.}

To better characterize the spatial distribution of operators, define superoperators $\P_S$ and $\Q_S \equiv \mathrm{Id} - \P_S$ such that $\P_S$ is the projection onto the subspace $\bigotimes_{\bx \notin S} \C I \otimes \bigotimes_{\bx \in S} \B_{\bx}$. That is, $\P_S$ projects onto operators supported on $S$ (so $\P_S[O] = O$ if $O$ is supported on $S$). More explicitly
\begin{equation} \label{def:P}
    \P_S[O] \equiv \int_{\supp U \cap S = \emptyset} d U \, U O U^\dagger,
\end{equation}
where the integral is Haar averaging over unitaries outside $S$.  
However, note $\Q_S$ is {\it not} the projection onto operators supported on $\L - S$. Consider an example of two sites $\L = \{1, 2\}$ and an operator $O = O_1 \otimes O_2$, where neither $O_1$ nor $O_2$ is the identity. By definition, $0 = \P_1[O] = \P_2[O] \neq \Q_1[O] = \Q_2[O] = O$. 

Henceforth if the subscript $S = \{\bx\}$ is a single-element set, $\P_{\{\bx\}}$ and $\Q_{\{\bx\}}$ are written as $\P_{\bx}$ and $\Q_{\bx}$ for short. Also define the superoperator $\P_T^r$ with a superscript $r > 0$ as $\P_S$ for $S = \{\by \in \L : \exists\, \bx \in T, |\bx - \by| < r\}$, i.e. projection onto operators supported within a distance $r$ from the set $T$, and $\Q_T^r \equiv \mathrm{Id} - \P_T^r$. 

From (\ref{def:P}) we have the following inequalities:
\begin{equation} \label{ineq:proj}
    \|\P_S[O]\| \leq \|O\|, \quad \|\Q_S[O]\| = \|O - \P_S[O]\| \leq \|O\| + \|\P_S[O]\| \leq 2 \|O\|,
\end{equation}
as $\|U\| = \|U^\dagger\| = 1$. Also $\P_S[I] = I$, $\Q_S[I] = 0$ for any $S \subset \L$. Unless otherwise specified, $\|O\|$ will always denote the operator norm, i.e. the maximal singular value of $O$.

We will be interested primarily in operators that are ``exponentially local'', denoted as $\B(\bx, R; \xi, C)$. We say $O \in \B(\bx, R; \xi, C)$ with $\bx \in \L$, $R, C \geq 0$ and $\xi > 0$, if for any $r \geq R$,
\begin{equation} \label{ineq:explocal}
    \|\Q_{\bx}^r[O]\| \leq C \|O\| e^{- (r - R) / \x}.
\end{equation}
Intuitively, this means $O$ is supported on the ball of radius $R$ and centered at $\bx$, up to an exponential tail of lengthscale $\x$. Operators supported on a finite number of sites (called ``finitely supported'') are of course exponentially local as well. We shall assume the Hamiltonian is a sum of finitely supported hermitian terms:
\begin{equation} \label{eq:H}
    H = \sum_{\a} J_\a H^\alpha, \quad H^\alpha \equiv \sum_{\bx \in \L} h^\a_{\bx}, \quad h^\alpha_{\bx} \in \B(\bx, R_H; 0^+, 0),
\end{equation}
which also defines $R_H > 0$ and $\alpha$ labels different couplings in the Hamiltonian. Translational invariance is not necessary but $\|h^\alpha\| \equiv \sup_{\bx \in \L} \|h^\alpha_\bx\|$ should be bounded. 

A Gibbs state is a density matrix of the form
\begin{equation}
    \rho = e^{-\sum_i \mu_i C^i} / \tr\,e^{-\sum_i \mu_i C^i} \,, \label{eq:Gibbs}
\end{equation}
for some $\mu_i \in \R$ and
\begin{equation} \label{eq:C}
    C^i \equiv \sum_{\bx \in \L} c^i_{\bx}, \quad c^i_{\bx} \in \B(\bx, R_H; 0^+, 0).
\end{equation}
In the proof it is {\it not} required that $[C^i, C^j] = 0$.
With only one $i$, with $\mu$ the inverse temperature and with $C = H$, $\rho$ is the thermal density matrix.

\section{Review of locality bounds} \label{app:bound}

In this section we review some established locality bounds. First is the Lieb-Robinson bound in local lattice systems \cite{lieb1972finite, bravyi2006lieb, nachtergaele2006lieb, nachtergaele2010lieb}. This both bounds the spread of support of a local operator by the distance $v |t|$, where $t$ is the real time of Heisenberg evolution, and also implies an emergent causality with $v$ acting as the ``speed of light''. For a discussion of the relation between (i) and (ii) in the following theorem, see section 3 of \cite{nachtergaele2009locality}.

\begin{theorem}[Lieb-Robinson] \label{thm:LR}
There exist $v, \xi_\LR, C_\LR > 0$, dependent on lattice geometry and Hamiltonian, such that 

(i) for any $t \in \R$, $r > 0$ and operator $O$, 
\begin{equation}
    \|\Q_{\supp O}^r[O(t)]\| \leq C_\LR |\partial \supp O| \|O\| \min\{1, e^{(v |t| - r) / \xi_\LR}\},
\end{equation}
where $|\partial S|$ is the number of lattice links (say, between $x$ and $y$) such that $x \in S$ but $y \notin S$;

(ii) for any $t \in \R$, operators $O_1$ and $O_2$,
\begin{equation}
    \|[O_1(t), O_2]\| \leq C_\LR \min\{|\partial \supp O_1|, |\partial \supp O_2|\} \|O_1\| \|O_2\| \min\{1, e^{(v |t| - d) / \xi_\LR}\},
\end{equation}
where $d = \min \{|\bx - \by| : \bx \in \supp O_1, \by \in \supp O_2\}$ is the distance between the support of $O_1$ and $O_2$. 
\end{theorem}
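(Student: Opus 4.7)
The plan is to establish part (ii) by Dyson-type iteration of the commutator equation of motion, and then to derive part (i) as a corollary using the Haar-integral representation of $\Q^r_{\supp O}$ in (\ref{def:P}).

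For (ii), I would set $B(t) \equiv [O_1(t), O_2]$ and apply the Jacobi identity to obtain
\begin{equation*}
    \dot B(t) = i[H, B(t)] - i[O_1(t), [H, O_2]] \,.
\end{equation*}
The first term generates a unitary conjugation that leaves $\|B(t)\|$ invariant after integration, so Duhamel's principle plus the triangle inequality, together with the decomposition $H = \sum_\alpha J_\alpha \sum_\bx h^\alpha_\bx$, yields the integral inequality
\begin{equation*}
    \|B(t)\| \leq \|[O_1, O_2]\| + \int_0^{|t|} \sum_{\alpha, \bx} |J_\alpha|\, \|[O_1(s), [h^\alpha_\bx, O_2]]\|\, ds \,,
\end{equation*}
where only $\bx$ with $\supp h^\alpha_\bx \cap \supp O_2 \neq \emptyset$ contribute, since $[h^\alpha_\bx, O_2]$ vanishes otherwise. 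Iterating with $O_2$ successively replaced by $[h^\alpha_\bx, O_2]$ generates $n$-fold nested time integrals indexed by chains $\bx_1, \ldots, \bx_n$ of Hamiltonian-term sites whose supports form a connected path from $\supp O_2$ toward $\supp O_1$. The $n$-th term is at most $(c|t|)^n/n!$ times the chain count, with $c$ set by $\sup_\alpha (|J_\alpha|\|h^\alpha\|)$ and the lattice coordination number. Because any connecting chain has length at least $n_{\min} \sim d / R_H$, Stirling's estimate of the tail sum $\sum_{n \geq n_{\min}}$ gives the asserted exponential factor $e^{(v|t|-d)/\xi_\LR}$ with $v \sim c R_H$ and $\xi_\LR \sim R_H$. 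The prefactor $|\partial \supp O_2|$ enters at the first recursion step because only boundary-crossing Hamiltonian terms contribute; swapping roles via the unitary invariance $\|[O_1(t), O_2]\| = \|[O_2(-t), O_1]\|$ produces the minimum with $|\partial \supp O_1|$.

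For (i), I would write $\Q^r_{\supp O}[O(t)] = O(t) - \int dU\, U O(t) U^\dagger$ with $U$ ranging over unitaries supported outside the $r$-neighborhood of $\supp O$. Since $U O(t) U^\dagger - O(t) = [U, O(t)] U^\dagger$, the integrand is bounded in norm by $\|[U, O(t)]\|$, which part (ii) controls with separation at least $r$ and with $\|U\| = 1$. Averaging over $U$ preserves the $|\partial \supp O|$ prefactor and delivers the stated estimate.

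The main obstacle is the combinatorial chain-sum that isolates $|\partial \supp O|$ rather than $|\supp O|$ in the prefactor. This is the key refinement over the original Lieb-Robinson estimate and requires carefully organizing the iteration so that only boundary-crossing Hamiltonian terms participate at the first recursion step and only geometrically local propagation at subsequent levels; getting the constants clean is what justifies invoking refinements of the Nachtergaele-Sims type rather than an elementary bookkeeping.
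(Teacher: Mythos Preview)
The paper does not actually prove Theorem~\ref{thm:LR}: it is stated in Appendix~\ref{app:bound} as a \emph{review} of established results, with citations to the original Lieb--Robinson paper and the refinements of Bravyi--Hastings--Verstraete and Nachtergaele--Sims, and the relation between (i) and (ii) is deferred to section~3 of Nachtergaele--Sims (2009). So there is no in-paper proof to compare against.

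That said, your sketch is the standard argument from precisely that literature, and it is correct in outline. The Duhamel/interaction-picture step, the restriction to Hamiltonian terms overlapping $\supp O_2$, the iteration generating chains of length $\gtrsim d/R_H$, and the Stirling tail estimate together give (ii); and your derivation of (i) from (ii) via the Haar representation $\Q^r_{\supp O}[O(t)] = \int dU\,[O(t),U]U^\dagger$ is exactly the mechanism the paper points to in its citation. Two small remarks: first, a sign in your Jacobi step is off (the source term is $+i[O_1(t),[H,O_2]]$), though this is immaterial once norms are taken. Second, your account of the $|\partial\supp O|$ prefactor is slightly imprecise: in the first iteration \emph{all} Hamiltonian terms touching $\supp O_2$ contribute, not only boundary-crossing ones; the boundary count emerges after summing over chains, because any chain reaching $\supp O_1$ must exit $\supp O_2$ through a boundary link, so the path sum factors through a sum over exit links. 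You correctly flag this combinatorial organization as the delicate point, and it is indeed where the Nachtergaele--Sims refinement does the real work.
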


In this bound $v \sim \sum_\alpha |J_\alpha| \|h^\alpha\| R_H$, recall (\ref{eq:H}), i.e. coupling times range of local terms in the Hamiltonian, and $\xi_\LR \sim R_H$. So quantities in the Lieb-Robinson bound are set by microscopic scales, to be differentiated from the butterfly velocity, which is an analog of a ``renormalized'' Lieb-Robinson velocity in thermal states \cite{roberts2016lieb}.

\newcommand{\A}{\mathrm{A}}
Next is the Araki bound \cite{nachtergaele2009locality, araki1969gibbs, bouch2015complex} extending the Lieb-Robinson bound to {\it complex} times. Note the theorem is specific to one dimension \cite{bouch2015complex} and $l_\A(\mu_i)$ may be exponential in $|\mu_i|$; in this sense the restriction is weaker for complex time evolution: 
\begin{theorem}[Araki] \label{thm:araki}
In one dimension, for any Gibbs state $\rho$ as defined in (\ref{eq:Gibbs}) but with $\mu_i \in \C$, there exist $l_\A(\mu_i), C_\A(\mu_i), \xi_\A > 0$, dependent on lattice geometry and charges $C^i$, such that for any finitely supported operator $O$ and $r \geq l_\A(\mu_i)$,
\bea
    \|\rho O \rho^{-1}\| & \leq & C_\A(\mu_i) |\supp O| \|O\|, \\
    \|\Q_{\supp O}^r[\rho O \rho^{-1}]\| & \leq & C_\A(\mu_i) |\supp O| \|O\| e^{(l_\A(\mu_i) - r) / \xi_\A},
\eea
where $|\supp O|$ is the number of sites in $\supp O$. 
\end{theorem}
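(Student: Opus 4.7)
The plan is to treat $\rho O \rho^{-1}$ as a complex-time evolution by $K \equiv \sum_i \mu_i C^i$: setting $O(\tau) \equiv e^{-\tau K} O e^{\tau K}$, one has $\partial_\tau O = [O, K]$, giving the formal Taylor series
\begin{equation*}
  \rho O \rho^{-1} \;=\; \sum_{n=0}^\infty \frac{1}{n!}\,\underbrace{[\cdots[[O,K],K],\cdots,K]}_{n\ \text{nested commutators}}\,.
\end{equation*}
Substituting $K = \sum_{i,\bx}\mu_i\,c^i_\bx$ turns the $n$-th term into a sum over sequences of sites and charge indices, and the nested-commutator structure forces these sites to form a chain of successively overlapping supports anchored at $\supp O$. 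The first bound of Theorem~\ref{thm:araki} should emerge from summing all such chains; the second, from restricting to those whose accumulated support reaches distance $r$ from $\supp O$.

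The first step is a bookkeeping estimate: at step $k$ of a chain the current support is contained in an interval of width $\le |\supp O| + 2k R_H$, so the number of choices for $\bx_k$ is $\lesssim |\supp O|/a + 2 k R_H/a$. Combined with $\|[A,c^i_\bx]\|\le 2\|A\|\|c^i_\bx\|$ and the local coupling strength $\widetilde\mu\equiv\sup_\bx\sum_i|\mu_i|\|c^i_\bx\|$, this gives the naive bound
\begin{equation*}
  \Bigl\|\tfrac{1}{n!}[\cdots[O,K],\ldots,K]\Bigr\| \;\lesssim\; \|O\|\,\frac{(2\widetilde\mu R_H/a)^n}{n!}\prod_{k=0}^{n-1}\bigl(|\supp O|/R_H + 2k\bigr),
\end{equation*}
which converges only for very small $\widetilde\mu$. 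The main analytic work is to improve this, and this is where one dimension is essential: connected chains of overlapping local intervals on $\Z$ have a combinatorial count mild enough that, after an interaction-picture resummation in the spirit of \cite{araki1969gibbs,bouch2015complex}, the $1/n!$ wins and the series converges for all $\widetilde\mu$, at the price of an $\mu_i$-dependent prefactor $C_A(\mu_i)$ linear in $|\supp O|$.

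The locality refinement (ii) then follows by splitting $K = K_< + K_>$, with $K_<$ gathering the local terms whose support lies within distance $r$ of $\supp O$, and writing $\rho O \rho^{-1} = e^{-K_<}\,V(1)\,O\,V(1)^{-1}\,e^{K_<}$ with $V(\tau)\equiv e^{\tau K_<}e^{-\tau K}$ satisfying a Dyson equation driven by $K_>$ conjugated by $e^{\pm \tau K_<}$. The outer $e^{\pm K_<}$ preserve the $r$-neighborhood of $\supp O$ and are killed by $\Q^r_{\supp O}$. Any surviving term in the Dyson expansion of $V(1)\,O\,V(1)^{-1}$ must contain a chain of at least $\sim r/R_H$ nested commutators bridging the gap between $\supp O$ and $\supp K_>$, and the same one-dimensional combinatorics then extracts an exponential factor $e^{(l_A(\mu_i)-r)/\xi_A}$ from those bridging terms once $r\ge l_A(\mu_i)$.

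The hard part is the one-dimensional resummation at arbitrary $\widetilde\mu$: the naive $1/n!$ in the Dyson series is essentially cancelled by the naive combinatorial count of chains, so convergence only emerges after one carefully exploits the linear structure of $\Z$ — crucially, that a chain of $n$ overlapping intervals in one dimension occupies a region of width only $O(n R_H)$ and admits essentially linearly many extensions at each step rather than exponentially many. In $d\ge 2$ this bookkeeping fails, the combinatorial count acquires extra exponential factors, and Araki-type bounds on $\|\rho O\rho^{-1}\|$ are not known in that generality; this is precisely why Theorem~\ref{thm:araki} is stated only for $d=1$.
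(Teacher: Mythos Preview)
The paper does not prove Theorem~\ref{thm:araki}; it is stated in Appendix~\ref{app:bound} as a known result with citations to \cite{nachtergaele2009locality, araki1969gibbs, bouch2015complex}, and is used as a black box in the subsequent appendices. So there is no ``paper's own proof'' to compare against.

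That said, your sketch is a faithful outline of the Araki--Bouch argument in those references: the nested-commutator expansion of the imaginary-time conjugation, the observation that the naive chain count cancels the $1/n!$ and hence requires a more careful one-dimensional resummation, and the near/far splitting $K=K_<+K_>$ together with a Dyson-type expansion to extract the exponential tail. Your remark that the combinatorics fails in $d\ge 2$ is exactly the content of \cite{bouch2015complex}, which the paper also cites for this point. One minor caution: the interaction-picture step as you wrote it requires $[K_<,K_>]\neq 0$ to be handled with some care (the commutator lives on the boundary of the $r$-neighborhood), and the constant $l_\A(\mu_i)$ that emerges can be exponentially large in $|\mu_i|$ --- the paper notes this explicitly just after the statement of Theorem~\ref{thm:araki}, and also notes that $\xi_\A$ can be taken arbitrarily small at the cost of enlarging $l_\A$, a fact your sketch does not mention but which is used later in Appendix~\ref{app:proof}.
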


Note, however, from the proof of the Araki bound (e.g., Theorem 3.1 of \cite{bouch2015complex}) one can see that there are Araki inequalities as stated in Theorem \ref{thm:araki} for {\it arbitrarily} small $\xi_\A$, at the expense of a possibly large $l_\A$. Later in the proof of our bound only $\xi_\A$ enters the final expression; hence at that time one can take $\xi_\A \to 0$ as a large $l_\A$ doesn't affect the result. 

Originally the Araki bound is only stated for finitely supported operators but it is straightforward to generalize it to exponentially local ones. Such generalization will be useful in proving our bound, so a proof is given in section \ref{app:araki}. 

Finally we would like to introduce some exponential clustering theorems: for particular kinds of states, equal-time connected correlations decay exponentially in space. More precisely for a state (density matrix) $\rho$, the correlation length of $\rho$ is the $\xi > 0$ that is optimal with respect to the following property: there exists $C > 0$ and a function $l_0(\cdot) > 0$ such that for any operators $O_1$ and $O_2$ (supported on sets $S$ and $T$) sufficiently far apart, i.e., $d \geq l_0(\delta)$,
\begin{equation} \label{ineq:cor}
    |\tr (\rho\, O_1 O_2) - \tr (\rho\, O_1) \tr (\rho\, O_2)| \leq C \delta \|O_1\| \|O_2\| e^{- d / \xi},
\end{equation}
where $\delta \equiv \min\{|\partial S|, |\partial T|\}$ is the number of lattice links crossing the boundary of $S$ or $T$, and $d \equiv \min \{|\bx - \by| : \bx \in S, \by \in T\}$ is the distance between two sets. Note that in this  statement, $O_1$ and $O_2$ could be {\it any}, not necessarily local, operators. 

Existence of a finite $\xi > 0$ with the property stated around (\ref{ineq:cor}) has been proved for (i) one-dimensional Gibbs states \cite{araki1969gibbs} (restricted to local operators $O_1$ and $O_2$), (ii) $\rho = |0\rangle \langle 0|$ where $|0\rangle$ is the unique ground state of a gapped Hamiltonian \cite{nachtergaele2006lieb, hastings2006spectral}, and (iii) thermal states $\rho \propto \exp (- \beta H)$ in general dimensions at sufficiently high temperatures \cite{kliesch2014locality} (clearly $\xi \to 0$ when $\beta \to 0$). Of course the Hamiltonians associated with these states must be local, as in (\ref{eq:H}) above. It is plausible that the correlation length $\xi$ as defined around (\ref{ineq:cor}) is finite for Gibbs states $\rho$ in general systems with local dynamics and away from phase transitions.

\section{Definitions of velocities} \label{app:velocity}

In this section we define precisely the (possibly anisotropic) Lieb-Robinson, butterfly and scrambling velocities introduced in the main text and prove the bound $v_B, v_S \leq v_{\LR}$. For definiteness fix a class of local operators, denoted as $\O$; for example, $\O$ could be all single-site operators with unit norm, localized at origin. The Lieb-Robinson bound Theorem \ref{thm:LR} (ii) can be stated for such operators along any particular direction $\bn$: 
\begin{theorem}[Operator-dependent anisotropic Lieb-Robinson] \label{thm:aniLR}
For any direction $\bn$ and operator $O_1, O_2 \in \O$, there exist $v$, $\xi_\LR$, $C_\LR > 0$, dependent on $\bn$, $O_1$, $O_2$, lattice geometry and Hamiltonian, such that 
for any $t > 0$, $x > 0$,
\begin{equation} \label{ineq:aniLR2}
    \|[O_1(0, t), O_2(x \bn, 0)]\| \leq C_\LR \|O_1\| \|O_2\| \min\{1, e^{(v t - x) / \xi_\LR}\}.
\end{equation}


\end{theorem}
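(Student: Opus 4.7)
The plan is to reduce this statement directly to the isotropic Lieb--Robinson bound, Theorem \ref{thm:LR}(ii). Since the theorem only asserts the existence of constants $(v, \xi_\LR, C_\LR)$ that are allowed to depend on $\bn, O_1, O_2$, no genuinely anisotropic estimate is required: the standard bound specialized to operators at spatial separation of order $x$ already does the job, and the direction $\bn$ enters merely as the geometric line along which the second operator is translated.

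First I would observe that since $O_1, O_2 \in \O$ are a fixed class of local operators localized near the origin, there exists $R_O > 0$ (depending only on $\O$) such that $\supp O_1$ and $\supp O_2$ are contained in a ball of radius $R_O$ around the origin. After translation, $O_2(x\bn, 0)$ is supported in the ball of radius $R_O$ around $x\bn$, so the distance between the supports of $O_1$ and $O_2(x\bn, 0)$ is at least $\max\{0, x - 2 R_O\}$. Applying Theorem \ref{thm:LR}(ii) with this distance gives
\begin{equation}
\|[O_1(0,t), O_2(x\bn, 0)]\| \leq C \|O_1\| \|O_2\| \min\{1, e^{(v t - x + 2 R_O)/\xi_\LR}\},
\end{equation}
where $C$ absorbs the $\min\{|\partial \supp O_1|, |\partial \supp O_2|\}$ factor, which is uniformly bounded for operators in $\O$.

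Next I would use the elementary pointwise inequality
\begin{equation}
\min\{1, e^{(vt - x + 2R_O)/\xi_\LR}\} \leq e^{2 R_O/\xi_\LR}\,\min\{1, e^{(vt - x)/\xi_\LR}\},
\end{equation}
which holds by a short case analysis on the sign of $vt - x + 2R_O$, to absorb the $R_O$-dependent prefactor into a renamed constant $C_\LR$. The regime $x \leq 2 R_O$ is then covered, after possibly enlarging $C_\LR \geq 2$, by the trivial bound $\|[O_1(0,t), O_2(x\bn, 0)]\| \leq 2\|O_1\|\|O_2\|$, which the $\min\{1,\cdot\}$ structure of (\ref{ineq:aniLR2}) accommodates. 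This yields (\ref{ineq:aniLR2}) with $v$ and $\xi_\LR$ inherited directly from Theorem \ref{thm:LR}.

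The only conceptual subtlety, and not really an obstacle, is that the word ``anisotropic'' in the theorem invites a sharp direction-dependent velocity $v(\bn)$ strictly smaller than the isotropic Lieb--Robinson velocity. Producing such an optimal $v(\bn)$ would require repeating the Lieb--Robinson iterative Dyson-series estimate while restricting the sum over lattice paths to those asymptotically aligned with $\bn$, which is a genuinely more delicate combinatorial task. However, since Theorem \ref{thm:aniLR} asserts only the \emph{existence} of some $v > 0$, the crude isotropic choice suffices and this refinement is sidestepped by the weak form of the statement.
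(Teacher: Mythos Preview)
Your proposal is correct and matches the paper's approach: the paper does not give a separate proof of Theorem~\ref{thm:aniLR} but simply introduces it as ``the Lieb--Robinson bound Theorem~\ref{thm:LR}(ii) \ldots stated for such operators along any particular direction $\bn$,'' and your argument makes this reduction explicit by absorbing the bounded-support radius $R_O$ and the boundary factor $|\partial\supp O_i|$ into the constant. One minor quibble: in the regime $x \leq 2R_O$ you need $C_\LR \geq 2\,e^{2R_O/\xi_\LR}$ rather than just $C_\LR \geq 2$, since the right-hand side of \eqref{ineq:aniLR2} may be as small as $C_\LR\|O_1\|\|O_2\|e^{-2R_O/\xi_\LR}$ there, but this is a cosmetic adjustment.
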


From Theorem \ref{thm:aniLR} one immediate candidate for defining the Lieb-Robinson velocity is
\begin{equation} \label{eq:defv1}
    v^{(1)}_{\LR}(\bn; O_1, O_2) \equiv \inf \{v > 0 : \exists\, \xi_\LR, C_\LR > 0 \text{ with the property stated in Theorem \ref{thm:aniLR}}\},
\end{equation}
that is, the smallest velocity with a Lieb-Robinson inequality. However such a definition shows some disadvantages in numerical or experimental applications: it is inaccurate to fit data to exponential tails because the theorem only states an inequality (not an equality), and in fact in many lattice systems of interest the tail is observed to be sub-exponential (e.g., Gaussian) \cite{khemani2018velocity, xu2018accessing}; also it is impractical, if not impossible, to decide whether such $\xi_\LR$ and $C_\LR$ exist for all times, from only a finite number of data points.

A more convenient definition is found in the original Lieb-Robinson paper \cite{lieb1972finite}
\begin{equation} \label{eq:defv2}
    v^{(2)}_{\LR}(\bn; O_1, O_2) \equiv \sup \left\{v : \lim_{t \to \infty} \frac{1}{t} \ln \|[O_1(0, t), O_2(v t \bn , 0)] \| \geq 0\right\}.
\end{equation}
We will assume that the limit exists and is a continuous function of $v$.
By definition $v_\LR^{(2)}$ gives a causality ``lightcone'' outside which (for $x / t > v $) the commutator vanishes exponentially at late times.

It is relatively easy to see that $v^{(1)}_\LR \geq v^{(2)}_\LR$:
\begin{proposition}
    For any direction $\bn$ and operators $O_1, O_2 \in \O$, we have $v^{(1)}_\LR(\bn; O_1, O_2) \geq v^{(2)}_\LR(\bn; O_1, O_2)$.
\end{proposition}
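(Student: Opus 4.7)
The goal is to show $v^{(1)}_\LR \geq v^{(2)}_\LR$, i.e.\ the velocity extracted from demanding an exponential Lieb--Robinson tail is at least as large as the velocity defined by the asymptotic growth rate of the commutator along a ray. My plan is to unpack the two definitions and show that every velocity strictly above $v^{(1)}_\LR$ lies outside the set defining $v^{(2)}_\LR$.

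First I would fix an arbitrary $v > v^{(1)}_\LR(\bn; O_1, O_2)$. By the definition (\ref{eq:defv1}) of $v^{(1)}_\LR$ as an infimum, I can pick some $v_0$ with $v^{(1)}_\LR \leq v_0 < v$ for which the conclusion of Theorem~\ref{thm:aniLR} holds: there exist constants $\xi_\LR, C_\LR > 0$ (depending on $\bn, O_1, O_2$) such that
\begin{equation}
    \|[O_1(0,t), O_2(x\bn, 0)]\| \leq C_\LR \|O_1\|\|O_2\| \min\{1, e^{(v_0 t - x)/\xi_\LR}\}
\end{equation}
for all $t, x > 0$ with $x\bn \in \L$.

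Next I would specialize this inequality to the ray $x = vt$. Substituting gives
\begin{equation}
    \|[O_1(0,t), O_2(vt\bn, 0)]\| \leq C_\LR \|O_1\|\|O_2\| \, e^{-(v-v_0)t/\xi_\LR},
\end{equation}
so taking logarithms, dividing by $t$, and sending $t \to \infty$ yields
\begin{equation}
    \lim_{t\to\infty} \frac{1}{t} \ln \|[O_1(0,t), O_2(vt\bn, 0)]\| \leq -\frac{v - v_0}{\xi_\LR} < 0.
\end{equation}
Hence $v$ is \emph{not} an element of the set $\{v' : \lim_{t\to\infty} t^{-1} \ln \|[O_1(0,t), O_2(v' t\bn, 0)]\| \geq 0\}$ appearing in the definition (\ref{eq:defv2}) of $v^{(2)}_\LR$.

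Finally, since this holds for every $v > v^{(1)}_\LR$, the set in (\ref{eq:defv2}) is contained in $(-\infty, v^{(1)}_\LR]$, and taking suprema gives $v^{(2)}_\LR \leq v^{(1)}_\LR$. There is no real obstacle here: the only mildly delicate point is that $v^{(1)}_\LR$ is defined as an infimum that may not be attained, which is why I pick a slightly larger $v_0$ where the Lieb--Robinson inequality actually holds, rather than trying to use $v^{(1)}_\LR$ itself. Everything else is a direct translation of the exponential pointwise bound into an asymptotic exponential rate along a ray.
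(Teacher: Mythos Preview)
Your proof is correct and follows essentially the same approach as the paper: both arguments use the Lieb--Robinson inequality at some velocity in the defining set of $v^{(1)}_\LR$ to force the asymptotic rate along any faster ray to be strictly negative, and then pass to infimum/supremum. The only cosmetic difference is bookkeeping: the paper fixes a $v$ in the set of (\ref{eq:defv1}) and shows every $v' > v$ is excluded from the set in (\ref{eq:defv2}), whereas you fix $v > v^{(1)}_\LR$ first and then choose $v_0 < v$ in the set; these are equivalent formulations of the same estimate.
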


\begin{proof}
    Let $v > 0$ belong to the set in (\ref{eq:defv1}), i.e., there exist $\xi, C > 0$ such that for all $x, t > 0$, $\|[O_1(0, t), O_2(x \bn, 0)]\| \leq C \|O_1\| \|O_2\| \min\{1, e^{(v t - x) / \xi}\}$. Then, for any $v' > v$, $\lim_{t \to \infty} t^{-1} \ln \|[O_1(0, t), O_2(v' t \bn , 0)]\| \leq \lim_{t \to \infty} t^{-1} \ln (C \|O_1\| \|O_2\| e^{(v - v') t / \xi}) = (v - v') / \xi < 0$, and hence any $v' > v$ is not contained in the set in (\ref{eq:defv2}). Therefore the supremum $v^{(2)}_\LR$ is at most $v$. This is true for any $v > 0$ in the set of (\ref{eq:defv1}), hence $v^{(2)}_\LR \leq v^{(1)}_\LR$. 
\end{proof}

Conversely to show that $v_\LR^{(1)} \leq v_\LR^{(2)}$, we need the following lemma:
\begin{lemma} \label{lem:lambda}
    For any positive functions $f(x, t)$ and $g(x, t)$, if limits
    \begin{equation} \label{eq:limits}
       \lim_{t \to \infty} \frac{1}{t} \ln f(v t, t) = \lambda_f(v), \quad \lim_{t \to \infty} \frac{1}{t} \ln g(v t, t) = \lambda_g(v),
    \end{equation}
    exist, are uniform for $v \in [v_0, \infty)$, and $\lambda_f(v) + a < \lambda_g(v)$ for some $a > 0$ and all $v \geq v_0$, then there is $t_0 > 0$ that 
    \begin{equation}
        f(x, t) < g(x, t) \quad \forall\, x \geq v_0 t, \,t \geq t_0.
    \end{equation}
\end{lemma}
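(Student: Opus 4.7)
The plan is to exploit the uniform convergence of the two limits to reduce the ray-wise inequality $f(x,t) < g(x,t)$ to a pointwise comparison between $\lambda_f(v)$ and $\lambda_g(v)$, which is given by hypothesis with a finite gap $a > 0$. The key observation is that the strict gap $\lambda_f(v) + a < \lambda_g(v)$ has some slack, so if each quantity $t^{-1}\ln f(vt,t)$, $t^{-1}\ln g(vt,t)$ is within $a/3$ of its limit for all $v \geq v_0$ simultaneously (which is exactly what uniform convergence guarantees), then $\ln g(vt,t) - \ln f(vt,t)$ will already be at least $ta/3$ for large enough $t$.

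More concretely, I would proceed as follows. First, invoke the uniform convergence stated in \eqref{eq:limits} to choose a single $t_0 > 0$ such that for every $t \geq t_0$ and every $v \geq v_0$,
\begin{equation}
    \left|\frac{1}{t}\ln f(vt,t) - \lambda_f(v)\right| < \frac{a}{3}, \qquad \left|\frac{1}{t}\ln g(vt,t) - \lambda_g(v)\right| < \frac{a}{3}.
\end{equation}
Combining these with the hypothesis $\lambda_g(v) - \lambda_f(v) > a$ gives
\begin{equation}
    \frac{1}{t}\ln g(vt,t) - \frac{1}{t}\ln f(vt,t) > a - \frac{2a}{3} = \frac{a}{3} > 0,
\end{equation}
so that $f(vt,t) < g(vt,t)$ for every $v \geq v_0$ and every $t \geq t_0$.

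Finally, to convert the ray-indexed statement into the $x$-indexed statement of the lemma, I would note that any pair $(x,t)$ with $x \geq v_0 t$ and $t \geq t_0$ can be rewritten as $(vt,t)$ with $v \equiv x/t \geq v_0$, so the previous inequality applies verbatim. No step here is delicate: the only nontrivial input is the uniform (as opposed to merely pointwise) convergence assumed in the statement, which is precisely what lets the single threshold $t_0$ be chosen independently of the direction $v$. If uniformity were dropped, the conclusion could fail because the required $t_0(v)$ might blow up as $v \to \infty$, so that is the one hypothesis one must use in an essential way.
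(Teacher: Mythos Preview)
Your proof is correct and follows essentially the same approach as the paper's own proof: invoke uniform convergence to find a single threshold $t_0$ independent of $v$, then use the gap $a$ to obtain the strict inequality. The only cosmetic difference is that the paper takes $\varepsilon = a/2$ where you take $a/3$, and it does not spell out the final $v = x/t$ substitution.
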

\begin{proof}
    Because the limits (\ref{eq:limits}) are uniform, for any $\e > 0$ there is $T(\e) > 0$ such that for any $t \geq T(\e)$ and $v \geq v_0$, 
    $\ln f(v t, t) / t< \lambda_f(v) + \e$, $\ln g(v t, t) / t > \lambda_g(v) - \e$.
    Now choose $\e = a / 2$ and $t_0 = T(a / 2)$, we have $\ln f(v t, t) / t < \lambda_f(v) + a / 2 < \lambda_g(v) - a / 2 < \ln g(v t, t) / t$ hence $f(v t, t) < g(v t, t)$, for all $t \geq t_0$, $v \geq v_0$.
\end{proof}

\begin{proposition}
    $v^{(1)}_\LR(\bn; O_1, O_2) \leq v^{(2)}_\LR(\bn; O_1, O_2)$, given the limit in (\ref{eq:defv2}) is uniform for all $v > v^{(2)}_\LR(\bn; O_1, O_2)$.
\end{proposition}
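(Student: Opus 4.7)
The plan is to show that for every $v > v^{(2)}_\LR(\bn; O_1, O_2)$ one can construct $\xi, C > 0$ for which the Lieb-Robinson inequality (\ref{ineq:aniLR2}) holds at velocity $v$. This will place $v$ in the set defining $v^{(1)}_\LR$, so that $v^{(1)}_\LR \leq v$; taking the infimum over such $v$ then yields $v^{(1)}_\LR \leq v^{(2)}_\LR$. The main tool is Lemma \ref{lem:lambda} applied with $f(x,t) = \|[O_1(0,t), O_2(x\bn, 0)]\|$ and $g(x, t) = C\,\|O_1\|\,\|O_2\|\, e^{(vt - x)/\xi}$, for which $\lambda_g(u) = (v-u)/\xi$. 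The task is thereby reduced to verifying the uniform inequality $\lambda_f(u) + a < (v-u)/\xi$ for all $u \geq v$ and some $a > 0$.

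I would verify this hypothesis in two matching ranges. Near $u = v$, the uniform convergence assumption together with $\lambda_f(v) < 0$ (which is immediate from $v > v^{(2)}_\LR$) controls $\lambda_f(u)$ on a neighborhood, yielding $\lambda_f(u) \leq -c$ on an interval $[v, u_*]$ for some $c > 0$. For $u \geq u_*$ I would invoke the unconditional Lieb-Robinson bound from Theorem \ref{thm:aniLR}, which supplies some (possibly crude) velocity $V$ and length $\xi_0$ with $\lambda_f(u) \leq (V-u)/\xi_0$. A direct algebraic check shows that choosing $a \in (0, c)$ small and $\xi$ slightly larger than $(V - v + c\xi_0)/(c-a)$ (in particular $\xi > \xi_0$) makes $(v - u)/\xi - a$ strictly dominate both of these bounds on $\lambda_f(u)$, so the hypothesis of Lemma \ref{lem:lambda} holds uniformly in $u \geq v$. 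Applying the lemma then yields $f(x, t) \leq g(x, t)$ for $x \geq vt$ and all $t \geq t_0$. For the leftover finite region (either $t \leq t_0$, or $x \leq vt$ where $\min\{1, e^{(vt-x)/\xi}\} = 1$), the trivial bound $f(x,t) \leq 2\|O_1\|\|O_2\|$ is absorbed into the constant $C$, completing (\ref{ineq:aniLR2}).

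The main obstacle I anticipate is justifying the uniform negativity $\lambda_f(u) \leq -c$ on a full neighborhood $[v, u_*]$, given only the pointwise fact $\lambda_f(v) < 0$. One route is to argue continuity of $\lambda_f$ from the uniform convergence hypothesis and pass to a compact interval; a second, more physical route exploits a monotonicity property of $\lambda_f$ in $u$, since moving $O_2$ further from $O_1$'s light cone cannot increase the commutator norm beyond a Lieb-Robinson tail. A rigorous version of the latter follows by bounding $f(u_2 t, t)$ in terms of $f(u_1 t, t)$ plus a Lieb-Robinson correction for $u_1 < u_2$ and letting $t \to \infty$. Either variant, combined with the two-range analysis above, closes the argument.
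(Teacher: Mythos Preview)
Your approach is the same as the paper's: verify the hypothesis of Lemma~\ref{lem:lambda} with $f$ the commutator norm and $g$ the exponential envelope, using the crude Lieb--Robinson bound of Theorem~\ref{thm:aniLR} to control the large-$u$ tail, then patch the region not covered by the lemma. Two points need tightening.

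First, the leftover region $\{t \leq t_0\}$ is \emph{not} finite: $x$ is unrestricted there, and as $x \to \infty$ the comparison function $g(x,t) = C\|O_1\|\|O_2\|e^{(vt-x)/\xi}$ decays to zero while your trivial bound $f \leq 2\|O_1\|\|O_2\|$ stays constant, so the ratio $f/g$ is unbounded and cannot be absorbed into $C$. The fix is to invoke the Lieb--Robinson inequality (\ref{ineq:aniLR2}) once more on the strip $\{t \leq t_0,\ x \geq Vt\}$: since you have arranged $\xi > \xi_0$, the bound $f \leq C_\LR\|O_1\|\|O_2\|e^{(Vt-x)/\xi_0}$ is dominated by $g$ up to a $t$-bounded factor there. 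Only the genuinely bounded rectangle $\{t \leq t_0,\ x \leq V t_0\}$ is handled by compactness, and the wedge $\{x \leq vt\}$ by the trivial bound. This three-way split is exactly what the paper does in its proof of (ii)~$\Rightarrow$~(iii).

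Second, your ``obstacle'' is slightly overstated. By definition of $v^{(2)}_\LR$ as a supremum, one already has $\lambda_f(u) < 0$ for \emph{every} $u > v^{(2)}_\LR$, not just at the single point $u = v$; what remains is only the passage from pointwise to uniform negativity on $[v,u_*]$. The paper simply assumes $\lambda_f$ is continuous (footnote attached to (\ref{eq:defv2})) and takes $\epsilon = \inf_{u \geq v}(-\lambda_f(u)/2)$, checking via the Lieb--Robinson tail that this infimum is strictly positive. Your proposed monotonicity route is unnecessary and would itself require justification.
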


\begin{proof}
    We would like to prove the proposition in the following two steps: 
    
    Step one: For any $v > v^{(2)}_\LR$, we show that (i) implies (ii), and (ii) implies (iii), where 
    \begin{enumerate}[\indent(i)]
        \item $\lim_{t \to \infty} t^{-1} \ln \|[O_1(0, t), O_2(v' t \bn , 0)]\| < 0$ for any $v' \geq v$;
        \item $\exists\, \e, \xi > 0$ that $\lim_{t \to \infty} t^{-1} \ln \|[O_1(0, t), O_2(v' t \bn , 0)]\| \leq (v - v') / \xi - \e$ for any $v' \geq v$;
        \item $\exists\, C, \xi > 0$ that $\|[O_1(0, t), O_2(x \bn, 0)]\| \leq C \|O_1\| \|O_2\| \min\{1, e^{(v t - x) / \xi}\}$ for $x, t > 0$.
    \end{enumerate}
    
    Step two: By definition (\ref{eq:defv2}) we have for any $v > v^{(2)}_\LR$, (i) holds for $v$; so (iii) is true for $v$ as well, and $v$ should be in the set on the right-hand side of (\ref{eq:defv1}) hence $v^{(1)}_\LR \leq v$. This shows that $v^{(1)}_\LR \leq v^{(2)}_\LR$. 
    
    So now it remains to prove that (i) $\Rightarrow$ (ii) and (ii) $\Rightarrow$ (iii):
    
    (i) $\Rightarrow$ (ii): For clarity let's denote $\lambda(v) \equiv \lim_{t \to \infty} t^{-1} \ln \|[O_1(0, t), O_2(v t \bn , 0)]\|$, then (i) says that $\lambda(v') < 0$ for any $v' \geq v$ and to arrive at (ii) we hope to find $\e, \xi > 0$ such that $\lambda(v') \leq (v - v') / \xi - \e$ for all $v' \geq v$.
    
    Before construction of $\e$ and $\xi$, it is remarkable that there is a restriction on $\lambda(v')$ from Theorem \ref{thm:aniLR}: the Lieb-Robinson bound states that there are some $C_0$, $v_0$, $\xi_0 > 0$ such that $\lambda(v') \leq \lim_{t \to \infty} t^{-1} \ln (C_0 \|O_1\| \|O_2\| e^{(v_0 - v') t / \xi_0}) = (v_0 - v') / \xi_0$ for all $v' > 0$. 
    
    We shall construct $\e > 0$ first. Note $(v - v') / \xi \leq 0$ for $v' \geq v$, hence it is required that $\lambda(v') \leq - \e$ for all $v' \geq v$. So we may choose $\e = \inf_{v' \geq v} (- \lambda(v') / 2) \geq 0$. To show that $\e > 0$, we have to check that $- \lambda(v') > 0$ is bounded from zero on $[v, \infty)$. The only concern is $\lambda(v')$ may be arbitrarily close to zero when $v' \to \infty$; but this is not possible because from the previous paragraph $-\lambda(v') \geq (v' - v_0) / \xi_0 \to \infty$ as $v' \to \infty$. Hence $\e > 0$ is well-defined in this way. 
    
    Then to satisfy $\lambda(v') \leq (v - v') / \xi - \e$ for all $v' \geq v$, choose ($\xi_0$ is there for future convenience) $\xi \equiv \max\{\xi_0, \sup_{v' \geq v} (v - v') / (\lambda(v') + \e)\}$ (as constructed in the last paragraph the denominator is always negative). The task is then to show that $\xi < \infty$; similarly the only place things could go wrong is when $v' \to \infty$, but in that limit $|\lambda(v') + \e| \geq |\lambda(v')| / 2 \geq (v' - v_0) / 2 \xi_0$ hence $\lim_{v' \to \infty} (v - v') / (\lambda(v') + \e) \leq 2 \xi_0$ is bounded. So $\xi > 0$ is well-defined as well and (ii) is proved.
    
    (ii) $\Rightarrow$ (iii): We would like to apply the Lemma \ref{lem:lambda} for $f(x, t) = \|[O_1(0, t), O_2(x \bn, 0)]\|$ and $g(x, t) = \|O_1\| \|O_2\| e^{(v t - x) / \xi}$. Note in this case $\lambda_f(v') = \lambda(v') \leq (v - v') / \xi - \e = \lambda_g(v') - \e$ for any $v' \geq v$. Then by the lemma there is $t_0 > 0$ such that $\|[O_1(0, t), O_2(x \bn, 0)]\| \leq \|O_1\| \|O_2\| e^{(v t - x) / \xi}$ for all $x \geq v t$ and $t \geq t_0$. Hence for (iii) to hold it suffices to choose that $C \equiv \max\{2, \sup_{0 < x < v t \text{ or } 0 < t < t_0}f(x, t) / g(x, t)\}$. As before we have to check that the supremum is not infinite. We will discuss the three cases (a) $0 < x < v t$, (b) $0 < t < t_0$ with $x \geq v_0 t$, and (c) $0 < t < t_0$ with $0 < x < v_0 t$ separately. 
    
    For $0 < x < v t$, $f(x, t) / g(x, t) = \|[O_1(0, t), O_2(x \bn, 0)]\| / \|O_1\| \|O_2\| e^{(v t - x) / \xi}$ is less than $\|[O_1(0, t), O_2(x \bn, 0)]\| / \|O_1\| \|O_2\| \leq 2$. So indeed $f(x, t) / g(x, t)$ is bounded in this region. 
    
    For $0 < t < t_0$ and $x \geq v_0 t$, $f(x, t) / g(x, t) = \|[O_1(0, t), O_2(x \bn, 0)]\| / \|O_1\| \|O_2\| e^{(v t - x) / \xi}$ can be bounded using the Lieb-Robinson Theorem \ref{thm:aniLR}: there is some $C_0, v_0, \xi_0 > 0$ such that $\|[O_1(0, t), O_2(x \bn, 0)]\| \leq C_0 \|O_1\| \|O_2\| e^{(v_0 t - x) / \xi_0} \leq C_0 \|O_1\| \|O_2\| e^{(v_0 t - x) / \xi}$ (by construction $\xi \geq \xi_0$) so $f(x, t) / g(x, t) \leq C_0 e^{(v_0 - v) t / \xi}$ which is a bounded function for $0 < t < t_0$. 
    
    Finally for $0 < t < t_0$ and $0 < x < v_0 t$, $f(x, t) / g(x, t)$ is bounded because it is continuous and the region is bounded. Hence we've shown that $C > 0$ is well-defined and with $\xi$ appearing in (ii), (iii) is true.  
\end{proof}

Henceforth the Lieb-Robinson velocity will be defined as $v_\LR \equiv v_{\LR}^{(1)} = v_{\LR}^{(2)}$. The technical uniformity condition is true for known examples. The same proof shows the equivalence of two definitions of the butterfly velocity. For future use only the definition corresponding to $v_\LR^{(2)}$ is recorded:
\begin{equation} \label{eq:defvb}
    v_B(\bn; O_1, O_2, \rho) \equiv \sup \left\{v : \lim_{t \to \infty} \frac{1}{t} \ln \mathcal{C}_{O_1O_2}(v t \bn, t; \rho) \geq 0\right\},
\end{equation}
where the OTOC $\mathcal{C}_{O_1O_2}(\bx, t; \rho)$ is defined in (\ref{def:OTOC}).
As the velocity-dependent quantum Lyapunov exponent is defined as in (\ref{def:lyapunov}),
an equivalent definition of $v_B$ reads:
\begin{equation}
    v_B(\bn; O_1, O_2, \rho) \equiv \sup \{v : \lambda_{O_1 O_2}(v \bn; \rho) \geq 0\}.
\end{equation}

As expected, the butterfly velocity in any state is bounded by the Lieb-Robinson velocity:
\begin{proposition}
    $v_B(\bn; O_1, O_2, \rho) \leq v_\LR(\bn; O_1, O_2)$ for any $O_1, O_2 \in \O$, density matrix $\rho$ and direction $\bn$.
\end{proposition}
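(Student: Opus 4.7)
The plan is to exhibit a pointwise inequality between the OTOC and the squared operator norm of the commutator, then pass to the exponents and use the definitions of the two velocities.

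First I would observe the elementary bound that for any density matrix $\rho$ and any operator $A$,
\begin{equation}
    \tr(\rho\, A^\dagger A) \leq \|A^\dagger A\|\,\tr(\rho) = \|A\|^2,
\end{equation}
since $A^\dagger A$ is positive and bounded above by $\|A\|^2 I$. Applying this with $A = [O_1(0,t), O_2(\bx,0)]$ gives
\begin{equation}
    \mathcal{C}_{O_1O_2}(\bx, t; \rho) \leq \|[O_1(0,t), O_2(\bx,0)]\|^2,
\end{equation}
for every $\bx$ and $t$.

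Next I would take logarithms, divide by $t$, set $\bx = v t \bn$ and send $t \to \infty$, obtaining
\begin{equation}
    \lambda_{O_1 O_2}(v \bn; \rho) \leq 2 \lim_{t \to \infty} \frac{1}{t} \ln \|[O_1(0,t), O_2(v t \bn, 0)]\|,
\end{equation}
whenever both limits exist (which is the same continuity/existence assumption already used in defining $v_B$ and $v_{\LR}^{(2)}$).

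Finally I would pick any $v > v_\LR = v_\LR^{(2)}$ and invoke definition \eqref{eq:defv2}: by the supremum characterisation, the right-hand side above is strictly negative, so $\lambda_{O_1 O_2}(v \bn; \rho) < 0$. Hence such a $v$ does not belong to the set on the right of \eqref{eq:defvb}, and therefore $v_B(\bn; O_1, O_2, \rho) \leq v$. Since this holds for every $v > v_\LR$, we conclude $v_B \leq v_\LR$. No step here looks difficult; the only mild subtlety is justifying that the limit defining $\lambda_{O_1 O_2}$ can be compared termwise with the limit defining $v_\LR^{(2)}$, which is immediate from monotonicity of $\ln$ once the pointwise norm bound is in place.
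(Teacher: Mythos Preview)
Your proof is correct and follows exactly the paper's approach: the paper's one-line proof cites the same inequality $\mathcal{C}_{O_1 O_2}(\bx, t; \rho) \leq \|[O_1(0, t), O_2(\bx, 0)]\|^2$ together with the definitions \eqref{eq:defv2} and \eqref{eq:defvb}, and you have simply written out the details of how those ingredients combine.
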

\begin{proof}
This follows from definition (\ref{eq:defv2}) and (\ref{eq:defvb}), and $\mathcal{C}_{O_1 O_2}(\bx, t; \rho) \leq \|[O_1(0, t), O_2(\bx, 0)]\|^2$.
\end{proof}

Finally the scrambling velocity can be precisely defined in the language of exponentially local operators, defined around (\ref{ineq:explocal}). Let $O \equiv i [O_1(0, t), O_2(\bv t, 0)]$, then\footnote{To make sure the limit exists, we have used the limit superior $\uplim$ and the limit inferior $\lowlim$. }
\begin{equation} \label{eq:defvs}
    v_S(\bv; O_1, O_2, \xi) \equiv \inf_{C > 0} \uplim_{t \to \infty} \frac{1}{t} \inf\left\{R \geq 0 : \exists\,\bx \in \L, \, O \in \B(\bx, R; \xi, C)\right\},
\end{equation}
where the smallest ball, with radius $R$ and centered at $\bx$, is understood as roughly the ``support'' of the commutator $O$. The quantities $\xi$ and $C$ characterize the exponential tail that we neglected in the main text. 
Clearly $v_S \geq 0$ and decreases with increasing $\xi$.

For any triple $(v, \xi \equiv \xi_\LR, C_\LR)$ from Theorem \ref{thm:LR}, we now show that $v_S(\bv; \xi) \leq v$. Thus we have an upper bound of $v_S$ by velocities with a Lieb-Robinson inequality. Note the $\xi$-dependence of $v_S$ was omitted in the main text. More precisely, if $O_2(\bv t, 0)$ is within the ``support'' of $O_1(0, t)$, for scrambling systems at late times we would expect $\|[O_1(0, t), O_2(\bv t, 0)]\|$ to equilibrate to a nonzero constant value; if so, $v_S \leq v$:
\begin{proposition}
    Given $\bv$, $\xi > 0$, $O_1, O_2 \in \O$, if for any $t > 0$, $O_1(0, t) \in \B(0, v t; \xi, C)$ for some $v > |\bv|$, $C > 0$ and
    $
        \lowlim_{t \to \infty} \|[O_1(0, t), O_2(\bv t, 0)]\| > 0,
    $
    then $v_S(\bv; O_1, O_2, \xi) \leq v$.
\end{proposition}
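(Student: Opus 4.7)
The plan is to show that the commutator $O \equiv i[O_1(0,t), O_2(\bv t, 0)]$ lies in the class $\B(0, vt; \xi, C')$ for some fixed $C' > 0$ and all sufficiently large $t$. Once this is established, the definition (\ref{eq:defvs}) of $v_S$ immediately gives $v_S(\bv; O_1, O_2, \xi) \leq v$: for every $C \geq C'$ the inner infimum over $R$ is bounded by $vt$, so its $\uplim/t$ is at most $v$, and the outer infimum over $C > 0$ can only lower this further.

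To establish this membership, I would first split $O_1(0,t) = \P_0^r[O_1(0,t)] + \Q_0^r[O_1(0,t)]$ for $r \geq vt$ and break $O$ into two commutators accordingly. The key observation is that $O_2$ has some finite support of radius $r_{O_2}$, so $O_2(\bv t, 0)$ is supported in the ball of radius $|\bv|t + r_{O_2}$ around the origin. Since $|\bv| < v$, once $t \geq r_{O_2}/(v - |\bv|)$ one has $|\bv|t + r_{O_2} \leq vt \leq r$, so both $\P_0^r[O_1(0,t)]$ and $O_2(\bv t, 0)$ are supported in the ball of radius $r$ about the origin. Their commutator is then supported there as well, and is therefore annihilated by $\Q_0^r$.

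For the surviving piece, the hypothesis $O_1(0,t) \in \B(0, vt; \xi, C_0)$ combined with the elementary bounds (\ref{ineq:proj}) would give
\[
\|\Q_0^r[i[\Q_0^r O_1(0,t), O_2(\bv t, 0)]]\| \;\leq\; 2\,\|[\Q_0^r O_1(0,t), O_2(\bv t, 0)]\| \;\leq\; 4 C_0 \|O_1\| \|O_2\|\, e^{-(r-vt)/\xi}.
\]
The hypothesis $\lowlim_{t\to\infty}\|O\| > 0$ then supplies some $m > 0$ and $t_0 > 0$ with $\|O\| \geq m$ for $t \geq t_0$, and dividing the above bound by $\|O\|$ turns it into $\|\Q_0^r[O]\| \leq C' \|O\| e^{-(r-vt)/\xi}$ with $C' = 4 C_0 \|O_1\| \|O_2\|/m$. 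This is exactly the statement $O \in \B(0, vt; \xi, C')$, which closes the argument outlined in the first paragraph.

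The main subtle point will be precisely this last normalization. The class $\B(\bx, R; \xi, C)$ measures the exponential tail relative to $\|O\|$, so an unconditional bound on $\|\Q_0^r[O]\|$ is not sufficient: if $\|O\|$ were allowed to collapse along a subsequence of times, the constant $C'$ obtained by the renormalization above would blow up and the exponential-locality condition would become vacuous. The hypothesis $\lowlim_{t\to\infty}\|[O_1(0,t), O_2(\bv t, 0)]\| > 0$ is included precisely to rule out that degenerate scenario, and once it is used the remainder of the proof is standard support bookkeeping via the projections $\P_0^r$ and $\Q_0^r$ together with the commutator bound $\|[A,B]\| \leq 2\|A\|\|B\|$.
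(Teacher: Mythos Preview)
Your proposal is correct and follows essentially the same route as the paper's proof: split $O_1(0,t)$ via $\P_0^r$ and $\Q_0^r$, use that $O_2(\bv t,0)$ is supported inside the ball of radius $vt$ for late times so the $\P$-piece of the commutator is killed by $\Q_0^r$, bound the $\Q$-piece with the exponential-locality hypothesis and $\|\Q_0^r[\cdot]\|\le 2\|\cdot\|$, and then normalize by the positive $\lowlim$ of $\|O\|$ to obtain a fixed $C'$. The only cosmetic difference is that the paper writes $O_2(\bv t,0)=\P_0^r[O_2(\bv t,0)]$ explicitly and takes $m=c/2$ with $c$ the $\lowlim$, yielding $C'=8C\|O_1\|\|O_2\|/c$ instead of your $4C_0\|O_1\|\|O_2\|/m$.
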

\begin{proof}
    Let $O(t) \equiv [O_1(0, t), O_2(\bv t, 0)]$, $c \equiv \lowlim_{t \to \infty} \|O(t)\| > 0$. As $|\bv| < v$, $\Q_0^r[O_2(\bv t, 0)] = 0$ for $r \geq v t$ at late times. Then $O(t) = [\P_0^r[O_1(0, t)], \P_0^r[O_2(\bv t, 0)]]+ [\Q_0^r[O_1(0, t)], \P_0^r[O_2(\bv t, 0)]]$. But the first term is supported in the ball of radius $r$ centered at origin, so $\|\Q_0^r[O(t)]\|  = \|\Q_0^r[\Q_0^r[O_1(0, t)], \P_0^r[O_2(\bv t, 0)]]\| \leq 4 \|\Q_0^r[O_1(0, t)]\| \|\P_0^r[O_2(\bv t, 0)]\| \leq 4 C \|O_1\| \|O_2\| e^{(v t - r) / \xi}$, where we have used the definition (\ref{ineq:explocal}) that for all $t > 0$ and $r \geq v t$,
    $\|\Q_0^r[O_1(0, t)]\| \leq C \|O_1\| e^{(v t - r) / \xi}$ with the inequalities (\ref{ineq:proj}).
    
    So there is a time $t_0 > 0$ that for all $t > t_0$, $\|O(t)\| \geq c / 2$ as well as $\|\Q_0^r[O(t)]\| \leq 4 C \|O_1\| \|O_2\| e^{(v t - r) / \xi}$ for all $r \geq v t$. Hence $\|\Q_0^r[O(t)]\| \leq C' \|O(t)\| e^{(v t - r) / \xi}$, for all $t > t_0$ and $r \geq v t$, if we choose $C' = 8 C \|O_1\| \|O_2\| / c$. That is, $O(t) \in \B(0, v t; \xi, C')$ for $t > t_0$ hence by definition (\ref{eq:defvs}), $v_S(\bv; O_1, O_2, \xi) \leq v$.
\end{proof}

All velocities can be maximized over direction $\bn$ to recover their isotropic definitions, or over $O_1, O_2 \in \O$ to remove the operator dependence. 

\section{Bounds for exponentially local operators} \label{app:araki}

In this section we collect some lemmas and generalize Theorem \ref{thm:araki} and the exponential clustering condition (\ref{ineq:cor}) to exponentially local operators. Readers are encouraged to review sections \ref{app:notation} and \ref{app:bound}. The following inequality will be useful: for any $A, B \geq 0$ and $k, \gamma > 0$,
    \begin{equation} \label{ineq:sum}
        \sum_{n = \lceil k \rceil}^{\infty} (A n + B) e^{- \gamma n} \leq (A k + A + B) e^{- \gamma k} (1 - e^{- \gamma})^{-2},
    \end{equation}
where $\lceil x \rceil$ denotes the least integer greater than or equal to $x$. 
To show this, by doing the summation exactly it is easy to check that for any $A, B \geq 0$, $\gamma > 0$ and integer $m \geq 1$,
\begin{equation}
        \sum_{n = m}^{\infty} (A n + B) e^{- \gamma n} \leq (A m + B) e^{- \gamma m} (1 - e^{- \gamma})^{-2},
    \end{equation}
and the inequality (\ref{ineq:sum}) follows because if $m = \lceil k \rceil$, $m \leq k + 1$ in the linear factor and $k \leq m$ implies that $e^{- \gamma m} \leq e^{- \gamma k}$ as well.
    
The following lemma bounds the product of two exponentially local operators:
\begin{lemma} \label{lem:prod}
    Let $O_1 \in \B(\bx, R; \xi_1, C_1)$ and $O_2 \in \B(\bx, R; \xi_2, C_2)$, then for any $r \geq R$,
    \begin{equation}
        \|\Q_\bx^r[O_1 O_2]\| \leq 2 (C_1 + C_2) \|O_1\| \|O_2\| e^{(R - r) / \max\{\xi_1, \xi_2\}}.
    \end{equation}
\end{lemma}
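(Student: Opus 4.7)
The plan is to control $\Q_\bx^r[O_1 O_2]$ by decomposing each factor into its projection onto the ball of radius $r$ around $\bx$ and an exponentially small "tail," and then observing that the piece in which both factors are localized inside the ball is annihilated by $\Q_\bx^r$.

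Concretely, I would start from the identity
\begin{equation*}
    O_1 O_2 \;=\; \P_\bx^r[O_1]\,\P_\bx^r[O_2] \;+\; \Q_\bx^r[O_1]\,O_2 \;+\; \P_\bx^r[O_1]\,\Q_\bx^r[O_2],
\end{equation*}
which is just the four-term expansion of $(\P_\bx^r+\Q_\bx^r)\otimes(\P_\bx^r+\Q_\bx^r)$ with the mixed terms grouped so that each summand contains \emph{at most one} tail factor. The first summand is a product of two operators, each supported in the ball of radius $r$ around $\bx$; hence it is itself supported in that ball and is killed by $\Q_\bx^r$.

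Apply $\Q_\bx^r$ to both sides, use the triangle inequality together with the estimates $\|\Q_\bx^r[A]\|\le 2\|A\|$ and $\|\P_\bx^r[A]\|\le \|A\|$ from (\ref{ineq:proj}), and submultiplicativity of the operator norm:
\begin{equation*}
    \|\Q_\bx^r[O_1 O_2]\| \;\le\; 2\,\|\Q_\bx^r[O_1]\|\,\|O_2\| \;+\; 2\,\|\P_\bx^r[O_1]\|\,\|\Q_\bx^r[O_2]\| \;\le\; 2\,\|\Q_\bx^r[O_1]\|\,\|O_2\| \;+\; 2\,\|O_1\|\,\|\Q_\bx^r[O_2]\|.
\end{equation*}
Now invoke the exponential locality hypothesis (\ref{ineq:explocal}) on each factor at radius $r\ge R$, bounding the $i$-th term by $2C_i\|O_1\|\|O_2\|e^{(R-r)/\xi_i}$. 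Since $R-r\le 0$ and $\xi_i\le\max\{\xi_1,\xi_2\}$, one has $(R-r)/\xi_i\le (R-r)/\max\{\xi_1,\xi_2\}$, so each exponential may be replaced by $e^{(R-r)/\max\{\xi_1,\xi_2\}}$. Summing yields the stated bound.

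The only genuine subtlety is the \emph{choice} of decomposition. The symmetric four-term expansion produces an additional $\Q_\bx^r[O_1]\,\Q_\bx^r[O_2]$ piece whose norm is controlled by $C_1 C_2$, which would spoil the advertised additive constant $C_1+C_2$. Grouping the mixed terms as above ensures that every surviving summand carries exactly one tail factor, which is precisely what makes the prefactor linear rather than bilinear in the $C_i$.
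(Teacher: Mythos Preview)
Your argument is correct and matches the paper's proof essentially verbatim: the paper uses the decomposition $O_1 O_2 = \P_\bx^r[O_1]\P_\bx^r[O_2] + O_1\,\Q_\bx^r[O_2] + \Q_\bx^r[O_1]\,\P_\bx^r[O_2]$ (your grouping with the roles of the two ``full'' factors swapped), kills the first term with $\Q_\bx^r$, and bounds the remaining two via (\ref{ineq:proj}) and (\ref{ineq:explocal}) exactly as you do. Your remark that the asymmetric grouping is what keeps the prefactor additive in $C_1,C_2$ is a nice clarification that the paper leaves implicit.
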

\begin{proof}
    Note that for any $r > 0$, $O_1 O_2 = \P_\bx^r[O_1] \P_\bx^r[O_2] + O_1 \Q_\bx^r[O_2] + \Q_\bx^r[O_1] \P_\bx^r[O_2]$, and $\Q_\bx^r[\P_\bx^r[O_1] \P_\bx^r[O_2]] = 0$. So by (\ref{ineq:proj}) and (\ref{ineq:explocal}), for $r \geq R$,
    \begin{align}
        \|\Q_\bx^r[O_1 O_2]\| &\leq 2 \|O_1\| \| \Q_\bx^r[O_2]\| + 2\|\Q_\bx^r[O_1]\| \| O_2\| \nonumber \\
        & \leq 2 C_2 \|O_1\| \|O_2\| e^{(R - r) / \xi_2} + 2 C_1 \|O_1\| \|O_2\| e^{(R - r) / \xi_1}. 
    \end{align}
\end{proof}

Next is the Araki bound (cf.\,Theorem \ref{thm:araki}) for exponentially local operators:
\begin{theorem} \label{thm:arakiexp}
For any one-dimensional Gibbs state $\rho$ as defined in (\ref{eq:Gibbs}) with $\mu_i \in \C$ and operator $O \in \B(\bx, R; \xi, C)$, there exists $C'(\mu_i, \xi, C) > 0$ (dependent on lattice geometry and $C^i$ as well) such that for all $r \geq R + l_\A(\mu_i) + a$,
\begin{align}
\|\rho O \rho^{-1}\| &\leq C'(\mu_i, \xi, C)  \|O\| (1 + 2 R / a), \\
    \|\Q_\bx^{r}[\rho O \rho^{-1}]\| &\leq C'(\mu_i, \xi, C) \|O\| [1 + 2 (r - l_\A(\mu_i)) / a] e^{(R + l_\A(\mu_i) + a - r) / (\xi_\A + \xi)}. \label{ineq:arakiexp2}
\end{align}
Here $l_\A(\mu_i)$ and $\xi_\A$ are those appearing in the Araki bound, and $a$ is the lattice spacing.
\end{theorem}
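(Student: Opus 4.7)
The plan is to reduce both claims to Theorem \ref{thm:araki} (the original Araki bound for finitely supported operators) by writing $O$ as a telescoping sum of finitely supported ``shells'' whose norms decay geometrically via the exponential-local property \eqref{ineq:explocal}. Concretely, set $O_0 = \P_\bx^R[O]$ and $O_n = \P_\bx^{R+na}[O] - \P_\bx^{R+(n-1)a}[O]$ for $n \geq 1$, so that $O = \sum_{n \geq 0} O_n$ with each $O_n$ supported in a one-dimensional ball of at most $1 + 2(R+na)/a$ sites. The inequalities \eqref{ineq:proj} give $\|O_0\| \leq \|O\|$, and the defining property of $\B(\bx, R; \xi, C)$ gives $\|O_n\| \leq \|\Q_\bx^{R+(n-1)a}[O]\| \leq C\|O\|\,e^{-(n-1)a/\xi}$ for $n \geq 1$.

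For the first inequality, I would apply the Araki norm bound $\|\rho O_n \rho^{-1}\| \leq C_\A(\mu_i)\,(1 + 2(R+na)/a)\,\|O_n\|$ shell by shell and sum. The geometric decay of $\|O_n\|$ against the linear growth of the support size yields, via the elementary inequality \eqref{ineq:sum} with $\gamma = a/\xi$, a convergent series of the form $C_\A(\mu_i)\|O\|(1 + 2R/a) + C_\A(\mu_i)\,C\|O\|\cdot (\text{linear in } 1 + 2R/a)\,(1 - e^{-a/\xi})^{-2}$, which collapses into the claimed form $C'(\mu_i, \xi, C)\|O\|(1 + 2R/a)$.

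For the second inequality, a direct shell-by-shell application of the Araki tail bound only achieves decay rate $1/\max(\xi, \xi_\A)$, not $1/(\xi_\A + \xi)$, so a more refined split is needed. I would write $O = \P_\bx^{R+s}[O] + \Q_\bx^{R+s}[O]$ at a tunable $s \geq 0$. The ``core'' $\P_\bx^{R+s}[O]$ is finitely supported on the ball of radius $R+s$; Theorem \ref{thm:araki}(ii) together with $\|\Q_\bx^r[\,\cdot\,]\| \leq 2\|\,\cdot\,\|$ gives a contribution $\lesssim (1 + 2(R+s)/a)\|O\|\,e^{(l_\A + R + s - r)/\xi_\A}$ whenever $r - R - s \geq l_\A$, with decay rate $1/\xi_\A$ in the variable $r - R - s$. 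The ``tail'' $\Q_\bx^{R+s}[O]$ has norm $\leq C\|O\|\,e^{-s/\xi}$; decomposing it again into shells as above and summing the Araki \emph{norm} bound across the shells gives $\|\rho \Q_\bx^{R+s}[O] \rho^{-1}\| \lesssim (1 + 2(R+s)/a)\|O\|\,e^{-s/\xi}$, with decay rate $1/\xi$ in $s$. Setting the two exponents equal via $s = (r - R - l_\A)\xi/(\xi_\A + \xi)$ balances the two contributions and produces the combined decay $e^{(l_\A + R - r)/(\xi_\A + \xi)}$ of \eqref{ineq:arakiexp2}, while the resulting bound $R + s \leq r - l_\A$ yields the polynomial prefactor $1 + 2(r - l_\A)/a$.

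The chief technical obstacle is bookkeeping and uniformity: each shell sum must converge uniformly in $R$ and $s$ with a prefactor growing only linearly in the ball radius, which is precisely what the inequality \eqref{ineq:sum} is designed to control. A secondary subtlety is verifying that the optimal $s$ leaves enough room for the Araki hypothesis $r - R - s \geq l_\A$; this is what forces the assumption $r \geq R + l_\A + a$ in the statement, with the extra $a$ providing a safety margin for the lattice-rounding implicit in the shell construction.
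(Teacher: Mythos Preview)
Your proposal is correct and follows essentially the same strategy as the paper: a telescoping shell decomposition of $O$ combined with a two-regime split at an optimized radius, using the Araki tail bound for the inner region and the Araki norm bound together with exponential locality for the outer region, then choosing the split point so that the two exponents match and yield the combined rate $1/(\xi_\A+\xi)$. The only cosmetic difference is that the paper shells $O$ from the outset and splits the resulting sum at a cutoff index (with $\alpha=\xi/(\xi_\A+\xi)$ playing the role of your $s/(r-R-l_\A)$), whereas you first separate $O=\P_\bx^{R+s}[O]+\Q_\bx^{R+s}[O]$ and only shell the tail; the bookkeeping and the final constants are equivalent.
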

\begin{proof}
    For the first inequality, let $m \equiv \lceil (R + a) / a \rceil$. Decompose
    $
        O = \P_\bx^{(m - 1) a}[O] + \sum_{n \geq m} O_n,
    $
    where $O_n \equiv \P_\bx^{n a} \Q_\bx^{(n - 1) a}[O] = \P_{\bx}^{n a}[O] - \P_{\bx}^{(n - 1) a}[O]$. Then by Theorem \ref{thm:araki} with (\ref{ineq:proj}) and (\ref{ineq:explocal}), for $n \geq m$,
    \begin{align} \label{ineq:p1}
        \|\rho O_n \rho^{-1}\| &= \|\rho \P_{\bx}^{n a} \Q_{\bx}^{(n - 1) a}[O] \rho^{-1}\| \leq C_\A(\mu_i) (2 n + 1) \|\P_{\bx}^{n a} \Q_{\bx}^{(n - 1) a}[O]\| \nonumber \\
        &\leq C_\A(\mu_i) (2 n + 1) \|\Q_{\bx}^{(n - 1) a}[O]\| 
        \leq C_\A(\mu_i) (2 n + 1) C \|O\| e^{(R - n a + a) / \xi}.
    \end{align}
    Also by Theorem \ref{thm:araki}, $\|\P[O]\| \leq \|O\|$ and $m \leq (R + a) / a + 1 = R / a + 2$,
    \begin{equation}
    \|\rho \P_\bx^{(m - 1) a}[O] \rho^{-1}\| \leq C_\A(\mu_i) (2 m - 1) \|\P_\bx^{(m - 1) a}[O]\| \leq C_\A(\mu_i) (2 R / a + 3) \|O\|.
    \end{equation}
    Sum (\ref{ineq:p1}) with (\ref{ineq:sum}) (where $A = 2$, $B = 1$, $k = (R + a) / a$ and $\gamma = a / \xi$) to get the bound
    \begin{align}
        \|\rho O \rho^{-1}\| &\leq C_\A(\mu_i) (2 R / a + 3) \|O\| + C_\A(\mu_i) C (2 R / a + 5) \|O\| (1 - e^{- a /\xi})^{-2}.
    \end{align}
    Denote
    $
    C_1(\mu_i, \xi, C) = 3 C_\A(\mu_i) + 5 C_\A(\mu_i) C (1 - e^{- a / \xi})^{-2}
    $,
    so that
    \begin{equation}
    \|\rho O \rho^{-1}\| \leq C_1(\mu_i, \xi, C) \|O\| (1 + 2 R / a).
    \end{equation}
    
    For the second inequality, expand 
    $
    O = \P_{\emptyset}[O] + \sum_{n \geq 0} O_n,
    $
    where $\P_{\emptyset}[O]$ is proportional to identity and $O_n \equiv \P_\bx^{n a}\Q_\bx^{(n - 1) a}[O] = \P_{\bx}^{n a}[O] - \P_{\bx}^{(n - 1) a}[O]$. Because $\Q_\bx^r[I] = 0$,
    \begin{equation}
        \Q_\bx^{r}[\rho O \rho^{-1}] = \sum_{n = 0}^{\infty} \Q_\bx^{r}[\rho O_n \rho^{-1}]. \label{eq:seriesQ}
    \end{equation}
    
    Let $\delta \equiv \alpha (r - l_\A(\mu_i) - R - a) \geq 0$ for any $0 < \alpha < 1$ and
    split the sum (\ref{eq:seriesQ}) into two parts: $0 \leq n a < R + \delta + a$ and $n a \geq R + \delta + a$. Apply Theorem \ref{thm:araki} for the first part (also note $\|O_n\| \leq 2 \|O\|$ by (\ref{ineq:proj})):
    \begin{equation}
        \|\Q_{\bx}^{r}[\rho O_n \rho^{-1}]\| \leq 2 C_\A(\mu_i) (2 n + 1) \|O\| e^{(l_\A(\mu_i) + n a - r) / \xi_\A}, \label{eq:l21}
    \end{equation}
    and further with inequalities (\ref{ineq:proj}) and definition (\ref{ineq:explocal}) for the second part:
    \begin{align}
        \|\Q_{\bx}^{r}[\rho O_n \rho^{-1}]\| &\leq  2 \|\rho O_n \rho^{-1}\| \leq 2 C_\A(\mu_i) (2 n + 1) \|O_n\| \nonumber \\&\leq 2 C_\A(\mu_i) (2 n + 1) \|\Q_{\bx}^{(n - 1) a}[O]\| \leq 2 C C_\A(\mu_i) (2 n + 1)\|O\| e^{(R - n a + a) / \xi}. \label{eq:l22}
    \end{align}
    
    Overall, sum (\ref{eq:l21}) as geometric series after applying $n \leq k$ and sum (\ref{eq:l22}) with (\ref{ineq:sum}) (where $A = 2$, $B = 1$, $k = (R + \delta + a) / a$ and $\gamma = a / \xi$):
    \begin{align}
        \|\Q_{\bx}^{r}[\rho O \rho^{-1}]\| \leq&\; 2 C_\A(\mu_i) (2 k + 1) \|O\| e^{(l_\A(\mu_i) + k a  - r) / \xi_\A} (1 - e^{-a / \xi_\A})^{-1}  \nonumber \\&+ 2 C C_\A(\mu_i) (2 k + 3) \|O\| e^{(R - k a + a) / \xi} (1 - e^{- a / \xi})^{-2} \nonumber \\
        \leq&\; 2 C_\A(\mu_i) [1 + 2 (r - l_\A(\mu_i)) / a] \|O\| e^{- (1 - \alpha) \delta / \alpha \xi_\A} (1 - e^{-a / \xi_\A})^{-1}  \nonumber \\&+ 2 C C_\A(\mu_i) [3 + 2 (r - l_\A(\mu_i)) / a] \|O\| e^{- \delta / \xi} (1 - e^{- a / \xi})^{-2},
    \end{align}
    where in the second inequality we have replaced $k a = R + \delta + a$ in the exponents and applied the bound $k \leq (r - l_\A(\mu_i)) / a$ (because $\alpha \leq 1$) in the prefactors.
    Now
    \begin{equation}
        \|\Q_{\bx}^{r}[\rho O \rho^{-1}]\| \leq C_2(\mu_i, \xi, C) \|O\| [1 + 2 (r - l_\A(\mu_i)) / a] e^{(R + l_\A(\mu_i) + a - r) / (\xi_\A + \xi)},
    \end{equation}
    if one chooses $\alpha = \xi / (\xi_\A + \xi)$ to equate the exponents and
    $
        C_2(\mu_i, \xi, C) = 2 C_\A(\mu_i) (1 - e^{- a / \xi_\A})^{-1} + 6 C C_\A(\mu_i) (1 - e^{-a / \xi})^{-2}
    $.
    
    Finally it suffices to choose $C'(\mu_i, \xi, C) \equiv \max\{C_1(\mu_i, \xi, C), C_2(\mu_i, \xi, C)\}$.
\end{proof}

Observe that the operator $\rho O \rho^{-1}$ as stated in (\ref{ineq:arakiexp2}), is not exponentially local explicitly (due to the prefactor that is linear in $r$). To work around this the following corollary of Theorem \ref{thm:arakiexp} is particularly useful:
\begin{corollary} \label{cor:arakiexp}
    For any $\e > 0$, there is a $\wtd{C}'(\mu_i, \xi, C, \e)$ such that 
    \begin{equation}
    \rho O \rho^{-1} \in \B\left(\bx, R + l_\A(\mu_i) + a; \xi_\A + \xi + \e, \wtd{C}' e^{\e R / (\xi_\A + \xi)^2} \|O\| / \|\rho O \rho^{-1}\|\right).
    \end{equation}
\end{corollary}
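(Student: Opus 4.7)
The corollary reformulates Theorem \ref{thm:arakiexp} so that $\rho O \rho^{-1}$ fits the definition of exponentially local operator in \eqref{ineq:explocal}: the linearly growing prefactor $1 + 2(r - l_\A)/a$ in the bound from the theorem must be absorbed into the exponential decay. The trade-off is to enlarge the decay length from $\xi' \equiv \xi_\A + \xi$ to $\xi' + \e$, at the cost of a multiplicative constant that grows at most like $e^{\e R / \xi'^2}$ in the radius $R$.

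The plan is as follows. With $R' \equiv R + l_\A(\mu_i) + a$ and $s \equiv r - R' \geq 0$, the bound from Theorem \ref{thm:arakiexp} reads
\[
\|\Q_\bx^r[\rho O \rho^{-1}]\| \leq C'(\mu_i, \xi, C)\, \|O\|\, [1 + 2(s + R + a)/a]\, e^{-s/\xi'}.
\]
Splitting $e^{-s/\xi'} = e^{-s\e/[\xi'(\xi' + \e)]}\, e^{-s/(\xi' + \e)}$ exposes the desired decay rate $e^{-s/(\xi' + \e)}$ and leaves the task of bounding $F(s, R) \equiv [1 + 2(s + R + a)/a]\, e^{-s\e/[\xi'(\xi' + \e)]}$ by a constant times $e^{\e R/\xi'^2}$, uniformly in $s \geq 0$.

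To isolate the $R$-dependence, I would substitute $t \equiv s + R \geq R$, under which
\[
F(s, R) = e^{\e R/[\xi'(\xi' + \e)]} \cdot [1 + 2(t + a)/a]\, e^{-t\e/[\xi'(\xi' + \e)]}.
\]
The bracketed function of $t$ is $R$-independent and attains a finite supremum $M = M(\mu_i, \xi, \e)$ over $t \geq 0$, computed by elementary one-variable calculus: the critical point is $t^* = \xi'(\xi' + \e)/\e - 3a/2$, with the endpoint $t = 0$ covering the degenerate case $t^* < 0$. Since $1/[\xi'(\xi' + \e)] \leq 1/\xi'^2$, the $R$-dependent prefactor obeys $e^{\e R/[\xi'(\xi' + \e)]} \leq e^{\e R/\xi'^2}$.

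Combining the pieces gives
\[
\|\Q_\bx^r[\rho O \rho^{-1}]\| \leq C'(\mu_i, \xi, C)\, M(\mu_i, \xi, \e)\, \|O\|\, e^{\e R/\xi'^2}\, e^{-(r - R')/(\xi' + \e)},
\]
which, after inserting $\|\rho O \rho^{-1}\|/\|\rho O \rho^{-1}\|$ to match the normalization in \eqref{ineq:explocal}, yields the corollary with $\wtd{C}'(\mu_i, \xi, C, \e) \equiv C'(\mu_i, \xi, C)\, M(\mu_i, \xi, \e)$. The only real obstacle is the bookkeeping step of separating $R$ cleanly from $s$ via the shift $t = s + R$, so that the resulting constant $\wtd{C}'$ is independent of the operator-specific radius $R$ and depends only on the declared parameters; the rest is routine calculus and algebraic manipulation.
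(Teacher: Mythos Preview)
Your proof is correct and follows essentially the same approach as the paper. Your variable $t = s + R$ coincides with the paper's $x \equiv r - l_\A(\mu_i) - a$, and both arguments hinge on the identity $1/\xi' - 1/(\xi'+\e) = \e/[\xi'(\xi'+\e)]$ together with the estimate $1/[\xi'(\xi'+\e)] \leq 1/\xi'^2$ to extract the $e^{\e R/\xi'^2}$ factor, after which the remaining polynomial-times-exponential in $t$ (or $x$) is bounded by an $R$-independent constant.
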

\begin{proof}
    First note that for $\zeta(\xi) \equiv \xi_\A + \xi$,
    \begin{equation} 
        e^{R / \zeta} = e^{R / (\zeta + \e)} e^{\e R / \zeta (\zeta + \e)} \leq e^{R / (\zeta + \e)} e^{\e R / \zeta^2},
    \end{equation}
    so it suffices to find $\wtd{C}'(\mu_i, \xi, C, \e)$ such that for all $x \equiv r - l_\A(\mu_i) - a \geq 0$,
    \begin{equation}
        C'(\mu_i, \xi, C) [1 + 2 (x + a) / a] e^{- x / \zeta(\xi)} \leq \wtd{C}' e^{-x / (\zeta(\xi) + \e)},
    \end{equation}
    which clearly exists.
\end{proof}

Finally we generalize inequality (\ref{ineq:cor}) to exponentially local operators as well; for future use we will work in one dimension only:
\begin{theorem} \label{thm:corexp}
    Let $\rho$ be a one-dimensional state with $\xi$, $C$ and $l_0(\cdot) > 0$ as stated around (\ref{ineq:cor}). If $O_1 \in \B(\bx, R_1; \xi_1, C_1)$, $O_2 \in \B(\by, R_2; \xi_2, C_2)$ and $|\bx - \by| \geq l_0(2) + R_1 + R_2$, 
    \begin{align}
        |\tr(\rho\,O_1 O_2) &- \tr(\rho\,O_1) \tr(\rho\,O_2)| \nonumber \\
        &\leq 2 (C + C_1 + C_2) \|O_1\| \|O_2\| e^{(R_1 + R_2 + l_0(2) - |\bx - \by|) / (\xi + \xi_1 + \xi_2)}. \label{eq:corbound}
    \end{align}
\end{theorem}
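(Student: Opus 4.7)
The plan is to reduce Theorem \ref{thm:corexp} to the stated clustering hypothesis (\ref{ineq:cor}) for interval-supported operators by truncating $O_1,O_2$ and controlling the tails via exponential locality, paralleling the proof of Theorem \ref{thm:arakiexp}. The key tool is bilinearity of the connected correlator $f(A,B)\equiv\tr(\rho AB)-\tr(\rho A)\tr(\rho B)$ together with the universal bound $|f(A,B)|\le 2\|A\|\|B\|$, which follows from $|\tr(\rho O)|\le\|O\|$.

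I would introduce truncation radii $r_1\equiv R_1+\alpha_1$, $r_2\equiv R_2+\alpha_2$ with $\alpha_1,\alpha_2\ge 0$ to be chosen, write $O_1=\P^{r_1}_\bx[O_1]+\Q^{r_1}_\bx[O_1]$ and $O_2=\P^{r_2}_\by[O_2]+\Q^{r_2}_\by[O_2]$, and expand bilinearly:
\begin{align*}
f(O_1,O_2)=f(\P^{r_1}_\bx[O_1],\P^{r_2}_\by[O_2])+f(\P^{r_1}_\bx[O_1],\Q^{r_2}_\by[O_2])+f(\Q^{r_1}_\bx[O_1],O_2).
\end{align*}
In one dimension the truncations $\P^{r_1}_\bx[O_1]$ and $\P^{r_2}_\by[O_2]$ are supported on intervals with $|\partial\supp|\le 2$, so provided $|\bx-\by|-r_1-r_2\ge l_0(2)$ the hypothesis (\ref{ineq:cor}) with $\delta=2$ bounds the first term by $2C\|O_1\|\|O_2\|\,e^{-(|\bx-\by|-r_1-r_2)/\xi}$. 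The other two are controlled using $|f(A,B)|\le 2\|A\|\|B\|$ together with the tail estimates $\|\Q^{r_1}_\bx[O_1]\|\le C_1\|O_1\|e^{-\alpha_1/\xi_1}$ and $\|\Q^{r_2}_\by[O_2]\|\le C_2\|O_2\|e^{-\alpha_2/\xi_2}$ from (\ref{ineq:explocal}), yielding contributions $2C_i\|O_1\|\|O_2\|\,e^{-\alpha_i/\xi_i}$ for $i=1,2$.

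The heart of the argument is balancing the three exponentials. Setting
\begin{align*}
\alpha_i=\xi_i\gamma,\qquad \gamma\equiv\frac{|\bx-\by|-R_1-R_2-l_0(2)}{\xi+\xi_1+\xi_2},
\end{align*}
the hypothesis $|\bx-\by|\ge R_1+R_2+l_0(2)$ guarantees $\gamma\ge 0$ (hence $\alpha_i\ge 0$), while $|\bx-\by|-r_1-r_2=\xi\gamma+l_0(2)\ge l_0(2)$ validates the clustering step. All three contributions then carry the common decay $e^{-\gamma}$ (the first picks up an additional factor $e^{-l_0(2)/\xi}\le 1$), and summing yields the prefactor $2(C+C_1+C_2)$ with exponent $-\gamma=(R_1+R_2+l_0(2)-|\bx-\by|)/(\xi+\xi_1+\xi_2)$, exactly matching (\ref{eq:corbound}).

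The only delicate point is the one-dimensional geometry, which enters precisely through the fact that any ball (interval) has boundary of at most $2$ lattice links, so that the clustering constant $\delta=2$ is independent of the truncation radius. In higher dimensions this boundary scales as $r^{d-1}$, and a multi-scale decomposition analogous to the one used for Theorem \ref{thm:arakiexp} would be required to preserve the shape of the bound.
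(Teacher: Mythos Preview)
Your proof is correct and follows essentially the same approach as the paper: the same bilinear splitting of the connected correlator into a truncated-truncated piece controlled by (\ref{ineq:cor}) with $\delta=2$ plus two tail pieces controlled by $|f(A,B)|\le 2\|A\|\|B\|$ and (\ref{ineq:explocal}), followed by the same optimal balancing $\alpha_i=\xi_i\Delta/(\xi+\xi_1+\xi_2)$. The only cosmetic differences are your parameterization (your $\alpha_i$ is the paper's $\alpha_i\Delta$) and the grouping of the cross terms in the expansion, neither of which affects the argument.
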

\begin{proof}
    Let $\Delta \equiv |\bx - \by| - l_0(2) - R_1 - R_2 \geq 0$, and define $r \equiv R_1 + \alpha_1 \Delta$ and $s \equiv R_2 + \alpha_2 \Delta$ for $\alpha_1, \alpha_2 > 0$ and $\alpha_1 + \alpha_2 < 1$. Denote $c(O_1, O_2) \equiv \tr(\rho\,O_1 O_2) - \tr(\rho\,O_1) \tr(\rho\,O_2)$ for convenience and observe $|c(O_1, O_2)| \leq 2 \|O_1\|\|O_2\|$. Then
    \begin{equation}
        c(O_1, O_2) = c(\P_\bx^r[O_1], \P_\by^s[O_2]) + c(\Q_\bx^r[O_1], \P_\by^s[O_2]) + c(O_1, \Q_\by^s[O_2]).
    \end{equation}
    By inequality (\ref{ineq:cor}), (note $\delta = 2$ if $S$ and $T$ are intervals in (\ref{ineq:cor}) and $\|\P[O]\| \leq \|O\|$)
    \begin{equation}
        |c(\P_\bx^r[O_1], \P_\by^s[O_2])| \leq 2 C \|O_1\| \|O_2\| e^{-l_0(2) / \xi} e^{-(1 - \alpha_1 - \alpha_2) \Delta / \xi},
    \end{equation}
    and by definition (\ref{ineq:explocal}), 
    \begin{align}
        |c(\Q_\bx^r[O_1], \P_\by^s[O_2])| & \leq 2 \|\Q_\bx^r[O_1]\| \|O_2\|\leq 2 C_1 \|O_1\| \|O_2\| e^{- \alpha_1 \Delta / \xi_1},  \\ 
        |c(O_1, \Q_\by^s[O_2])| & \leq 2 \|O_1\| \|\Q_\by^s[O_2]\| \leq 2 C_2 \|O_1\| \|O_2\| e^{- \alpha_2 \Delta / \xi_2}.
    \end{align}
    Now choose $\alpha_1 = \xi_1 / (\xi + \xi_1 + \xi_2)$ and $\alpha_2 = \xi_2 / (\xi + \xi_1 + \xi_2)$ so that the exponents with $\Delta$ are all equal. Sum them up to get (\ref{eq:corbound}).
\end{proof}

\section{Proof of the bound} \label{app:proof}

In this section we give a proof of the bounds stated in the main text. To avoid clutter of notations, all quantities in this section may depend on lattice geometry, Hamiltonian $H$ (\ref{eq:H}) and charges $C^i$ (\ref{eq:C}) implicitly. 

\begin{theorem}
    For any one-dimensional Gibbs state $\rho$ as defined in (\ref{eq:Gibbs}) with correlation length $\xi_\cor$ (read around (\ref{ineq:cor}) for a definition), $\e, \d > 0$, any operators $O_1, O_2$ and $\bx \in \L$, $t > 0$, there exist $A(\mu_i, \xi, C, \e)$, $B(\mu_i) > 0$ such that
    \begin{align}
        \left|\frac{\partial \mathcal{C}_{O_1 O_2}(\bx, t; \rho)}{\partial \mu_i}\right| &\leq A \sup_{\by \in \L} \|c^i_{\by}\| \|O_1\|^2 \|O_2\|^2 (1 + 2 R / a) e^{\e R / (\xi + \xi_\A)^2} e^{- \delta / (\xi_\cor + \xi_\A + \xi + \e)} \nonumber \\
        &\quad + 2 c^i \left(R + \delta + B \right) \mathcal{C}_{O_1 O_2}(\bx, t; \rho) / a, \label{bound}
    \end{align}
    and 
    \begin{align}
        &\left|\frac{\partial \mathcal{C}_{O_1 O_2}(\bx, t; \rho)}{\partial J_\alpha}\right| \leq A \beta \sup_{\by \in \L} \|h^\a_{\by}\| \|O_1\|^2 \|O_2\|^2 (1 + 2 R / a) e^{\e R / (\xi + \xi_\A)^2} e^{- \delta / (\xi_\cor + \xi_\A + \xi + \e)} \nonumber \\
        & \quad + 2 \beta h^\alpha \left(R + \delta + B \right) \mathcal{C}_{O_1 O_2}(\bx, t; \rho) / a + 2 \int_0^t d s\, \sqrt{\mathcal{C}_{O_1 O_2}(\bx, t; \rho) \mathcal{C}_{ [H^\alpha(-s), O_1] O_2}(\bx, t; \rho)}, \label{bound2}
    \end{align}
    where $a$ is the lattice spacing and $\xi_\A$ is defined in Theorem \ref{thm:araki}. The inverse temperature is denoted as $\beta$ and $J_\alpha$ labels couplings in the Hamiltonian (\ref{eq:H}). Denote $O \equiv i [O_1(0, t), O_2(\bx, 0)]$; $R$, $\xi$ and $C$ are such that $O \in \B(\by_0, R; \xi, C)$ for some $\by_0 \in \L$. Finally 
    \begin{equation}
        c^i \equiv \int_0^1 d s\, c^i(s) \equiv \int_0^1 d s\, \sup_{\by \in \L} \frac{|\tr (\rho^{s} \wtd{c}^i_{\by} \rho^{1 - s} O^\dagger O)|}{\tr (\rho\, O^\dagger O)},
    \end{equation}
    where $\wtd{c}^i_\by \equiv c^i_\by - \tr (\rho\, c^i_\by)$, and same for $h^\alpha$ with $c^i_\by$ replaced by $h^\alpha_{\by}$. And if $C^i$ commute with each other, $c^i$ can be chosen as 
    \begin{equation} \label{eq:commute}
        c^i \equiv \sup_{\by \in \L} \frac{|\tr (\sqrt{\rho}\, \wtd{c}^i_{\by} \sqrt{\rho} O^\dagger O)|}{\tr (\rho\, O^\dagger O)} \leq 2 \sup_{\by \in \L} \|c^i_{\by}\|.
    \end{equation}
\end{theorem}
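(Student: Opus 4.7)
The plan is to make the three-step sketch of Section \ref{sec:bound} rigorous while allowing non-commuting charges and tracking the exponential tails of every operator involved. First, I differentiate. Duhamel's identity
\begin{equation}
    \partial_{\mu_i}\, e^{-\sum_j \mu_j C^j} = -\int_0^1 ds\, e^{-s \sum_j \mu_j C^j}\, C^i\, e^{-(1-s)\sum_j \mu_j C^j},
\end{equation}
combined with $\partial_{\mu_i}\ln \tr e^{-\sum_j \mu_j C^j} = -\tr(\rho\, C^i)$, yields $\partial_{\mu_i} \mathcal{C}_{O_1 O_2} = -\int_0^1 ds\, \tr(\rho^s\, \wtd{C}^i\, \rho^{1-s}\, O^\dagger O)$. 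When the $C^i$ mutually commute, $\rho^s$ factors out of the trace, so by cyclicity $s$ may be fixed at $1/2$, recovering the simpler $c^i$ of (\ref{eq:commute}). Writing $\wtd{C}^i = \sum_{\by \in \L} \wtd{c}^i_\by$, I split the sum over $\by$ into sites with $|\by - \by_0| \le R + \delta$ (inside the support of $O$) and sites with $|\by - \by_0| > R + \delta$ (outside).

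The inside sum is controlled by the elementary bound $|\tr(\rho^s \wtd{c}^i_\by \rho^{1-s} O^\dagger O)| \le c^i(s)\,\tr(\rho\, O^\dagger O) = c^i(s)\,\mathcal{C}_{O_1 O_2}$, which is immediate from the definition of $c^i(s)$. In one spatial dimension the number of such $\by$ is bounded by $2(R + \delta)/a$ up to a bounded additive constant, which, together with the few sites reached through the exponential tail of $O$ near the boundary of its nominal ball, is absorbed into the constant $B$. Integrating in $s$ produces the $c^i$-coefficient term in (\ref{bound}).

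The outside sum is the crux. The operator $\rho^s \wtd{c}^i_\by \rho^{1-s}$ is not manifestly local, but Corollary \ref{cor:arakiexp} of the generalized Araki bound (Theorem \ref{thm:arakiexp}) places it in $\B(\by,\, R_H + l_\A(\mu_i) + a;\, \xi_\A + \e,\, \wtd{C}')$ for any $\e > 0$, i.e.\ exponentially local with tail length approaching $\xi_\A$. Lemma \ref{lem:prod} likewise shows $O^\dagger O$ is exponentially local around $\by_0$ with tail length $\xi$ and a prefactor proportional to $C$. Once $|\by - \by_0|$ exceeds the microscopic scale $R + l_0(2) + R_H + l_\A + a$, the exponential-clustering Theorem \ref{thm:corexp} for exponentially local operators gives the connected correlator a decay proportional to $\exp(-|\by - \by_0|/(\xi_\cor + \xi_\A + \xi + \e))$, and the $(1 + 2R/a)$ and $e^{\e R/(\xi + \xi_\A)^2}$ prefactors trace back to the operator-size dependence in those bounds. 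Summing the resulting geometric series from $|\by - \by_0| > R + \delta$ using (\ref{ineq:sum}), and absorbing all microscopic lengths into $B$, produces the first term of (\ref{bound}). The bound (\ref{bound2}) follows by the same scheme: $\partial_{J_\alpha}(\beta H) = \beta H^\alpha$ introduces the explicit $\beta$ prefactor, and in addition $J_\alpha$ enters the Heisenberg evolution, so Duhamel contributes $i\int_0^t ds\,[H^\alpha(-s), O_1](t)$ to $\partial_{J_\alpha} O_1(t)$; expanding $|O|^2$ and applying Cauchy--Schwarz to the resulting cross term yields the final integral term in (\ref{bound2}).

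The main obstacle is controlling the imaginary-time-rotated density $\rho^s \wtd{c}^i_\by \rho^{1-s}$ as an exponentially local operator uniformly in $s \in [0, 1]$ and $\by$, and matching its tail length with the equal-time clustering length of $\rho$ without spoiling summability of the outside sum. This is precisely what restricts the argument to one dimension: both Theorem \ref{thm:araki} and the clustering statement (\ref{ineq:cor}) hold for arbitrary Gibbs states only in 1D, and the outside sum is summable only because the lattice ball volume $V_r \sim r/a$ grows linearly in $r$. The remaining task is bookkeeping: choosing the small parameter $\e$ in Corollary \ref{cor:arakiexp} and matching the lengthscales in Lemma \ref{lem:prod} and Theorem \ref{thm:corexp} so that they combine into the single scale $\xi_\cor + \xi_\A + \xi + \e$ appearing in (\ref{bound}).
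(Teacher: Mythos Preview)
Your three-step strategy matches the paper's, but the handling of the ``outside'' sum diverges in one substantive way. The paper does \emph{not} apply the Araki bound to the rotated charge density; instead it first uses cyclicity to rewrite
\[
\tr(\rho^{s}\,\wtd{c}^{\,i}_{\by}\,\rho^{1-s}\,O^\dagger O)
   \;=\;\tr\!\bigl(\rho\,(\rho^{-s}O^\dagger O\,\rho^{s})\,\wtd{c}^{\,i}_{\by}\bigr),
\]
and then feeds the \emph{large} operator $O^\dagger O$ into Theorem~\ref{thm:arakiexp} and Corollary~\ref{cor:arakiexp}, so that $\rho^{-s}O^\dagger O\,\rho^{s}\in\B(\by_0,\,R+l_\A+a;\,\xi_\A+\xi+\e,\,\cdot)$. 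The clustering Theorem~\ref{thm:corexp} is then applied to this rotated big operator against the strictly local $\wtd{c}^{\,i}_{\by}$. This choice is exactly what generates the prefactors $(1+2R/a)$ and $e^{\e R/(\xi+\xi_\A)^2}$ in (\ref{bound}): they are the size-dependent constants of the generalized Araki bound evaluated on an operator of radius $R$, not $R_H$. In your route those $R$-dependent factors would not appear, so your account of their origin is incorrect.

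Your alternative (rotate the small operator) can be made to work, but not as you stated it. The object $\rho^{s}\wtd{c}^{\,i}_{\by}\rho^{1-s}$ is not a conjugation, so Corollary~\ref{cor:arakiexp} does not apply to it. You must first cycle the trace to $\tr\!\bigl(\rho\,(\rho^{-(1-s)}\wtd{c}^{\,i}_{\by}\rho^{1-s})\,O^\dagger O\bigr)$; the inner factor is then a genuine conjugation by the Gibbs state with parameters $-(1-s)\mu_j$, to which the Araki bound applies uniformly in $s\in[0,1]$, placing it in $\B(\by,\,R_H+l_\A+a;\,\xi_\A+\e,\,\cdot)$. Combined with $O^\dagger O\in\B(\by_0,R;\xi,4C)$ from Lemma~\ref{lem:prod} and then Theorem~\ref{thm:corexp}, one recovers the same decay scale $\xi_\cor+\xi_\A+\xi+\e$. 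The bound so obtained is actually slightly stronger than (\ref{bound}) (the $R$-dependent prefactors are replaced by constants), which still proves the theorem as stated; the paper's route has the advantage that the Araki rotation is done once, on a single operator, rather than uniformly over a family indexed by $\by$ and $s$.
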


\begin{proof}
    We start with proving (\ref{bound}). By definition (\ref{def:OTOC}) and (\ref{eq:Gibbs}),
    \begin{equation} \label{eq:firststep}
        \frac{\partial \mathcal{C}_{O_1 O_2}(\bx, t; \rho)}{\partial \mu_i} = - \int_0^1 d s\, \tr (\rho^{s} \wtd{C}^i \rho^{1 - s} O^\dagger O),
    \end{equation}
    where for any operator $C$, $\wtd{C} \equiv C - \tr (\rho\, C)$. Now recall $C^i$ is a sum of local terms (\ref{eq:C}):
    \begin{equation} \label{eq:sumC}
        C^i = \sum_{\by \in S(r)} c^i_\by + \sum_{\by \in \L - S(r)} c^i_\by, 
    \end{equation}
    for any $S(r) \equiv \{\by \in \L : |\by - \by_0| \leq r\}$. For any $\by \in \L$, by definition of $c^i(s)$,
    \begin{equation} \label{ineq:bound1}
        |\tr (\rho^{s} \wtd{c}^i_{\by} \rho^{1 - s} O^\dagger O)| \leq c^i(s)\, \tr (\rho\, O^\dagger O).
    \end{equation}
    
    The inequality (\ref{ineq:bound1}) is good enough for the terms in $S(r)$. For the remaining terms with $\by$ away from $\by_0$ we have a better estimate because connected correlation decays when operators are far apart. There is a technical complication due to the fact that the factors of $\rho$ are separated by -- and do not necessarily commute with -- the $\wtd{c}^i_{\by}$. For this reason we need to use the Araki bound to show that operators remain sufficiently local under conjugation by the density matrix.    Indeed by Lemma \ref{lem:prod}, $O^\dagger O \in \B(\by_0, R; \xi, 4 C)$ and from Theorem \ref{thm:arakiexp} and Corollary \ref{cor:arakiexp}, there is $C_1(\mu_i, \xi, C, \e) > 0$ and $l(\mu_i) > 0$ such that for any $0 \leq s \leq 1$,
    \begin{equation} \label{ineq:normbound}
        \|\rho^{-s} O^\dagger O \rho^{s}\| \leq C_1 \|O^\dagger O\| (1 + 2 R / a),
    \end{equation}
    \begin{equation}
        \rho^{-s} O^\dagger O \rho^{s} \in \B \left(\by_0, R + l(\mu_i) + a; \xi_\A + \xi + \e, C_1 e^{\e R / (\xi_\A + \xi)^2} \|O^\dagger O\| / \|\rho^{-s} O^\dagger O \rho^{s}\|\right).
    \end{equation}
    Hence by Theorem \ref{thm:corexp}, because $\tr(\rho\, \wtd{c}^i_\by) = 0$, for any $0 \leq s \leq 1$,
    \begin{align} \label{ineq:bound2}
        |\tr (\rho^{s} \wtd{c}^i_{\by} \rho^{1 - s} O^\dagger O)| &= |\tr (\rho \rho^{- s} O^\dagger O \rho^{s} \wtd{c}^i_{\by})| \nonumber \\ &\leq 2 C_2 e^{R + l(\mu_i) + a + R_H + l_0(2) - |\by - \by_0|) / (\xi_\cor + \xi_\A + \xi + \e)},
    \end{align}
    where $C_2$ is defined in terms of the prefactor $C_\cor(\mu_i)$ in (\ref{ineq:cor}) as, using (\ref{ineq:normbound}),
    \begin{align}
        & C_2 \equiv C_\cor \sup_{\by \in \L} \|\wtd{c}^i_\by\|\|\rho^{-s} O^\dagger O \rho^{s}\| +  C_1 e^{\e R / (\xi_\A + \xi)^2} \sup_{\by \in \L} \|\wtd{c}^i_\by\| \|O^\dagger O\|\nonumber \\
        & \leq C_\cor  C_1 \sup_{\by \in \L} \|\wtd{c}^i_\by\| \|O\|^2 (1 + 2 R / a) + C_1 e^{\e R / (\xi_\A + \xi)^2} \sup_{\by \in \L} \|\wtd{c}^i_\by\| \|O\|^2. \label{ineq:bound3}
    \end{align}
    
    Now bound the sum (\ref{eq:sumC}) by choosing $r = R + l(\mu_i) + a + R_H + l_0(2) + \delta$ and apply (\ref{ineq:bound1}) for $\by \in S(r)$ and (\ref{ineq:bound2}) for $\by \notin S(r)$, (denote $\zeta \equiv \xi_\cor + \xi_\A + \xi + \e$)
    \begin{equation}
        |\tr (\rho^{s} \wtd{C}^i \rho^{1 - s} O^\dagger O)| \leq c^i(s) (1 + 2 r  / a) \tr (\rho\, O^\dagger O) + 4 C_2 e^{- \delta / \zeta} (1 - e^{- a / \zeta})^{-1},
    \end{equation}
    and use the inequality (\ref{ineq:bound3}) and $\|\wtd{c}^i_{\by}\| \leq 2 \|c^i_{\by}\|$ to reduce to the form (\ref{bound}).
    
    Proving (\ref{bound2}) is essentially the same except in the first step:
    \begin{equation}
        \frac{\partial \mathcal{C}_{O_1 O_2}(\bx, t; \rho)}{\partial J_\alpha} = - \beta \int_0^1 d s\, \tr (\rho^{s} \wtd{H}^\alpha \rho^{1 - s} O^\dagger O) + 2\,\mathrm{Re}\, \tr\left(\rho\, O^\dagger \frac{\partial O}{\partial J_\alpha} \right),
    \end{equation}
    there is an additional term due to coupling dependence of $O_1(0, t)$. By definition,
    \begin{equation}
        \frac{\partial O}{\partial J_\alpha} = -\int_0^t d s\, \left[[H^\alpha(s), O_1(0, t)], O_2(\bx, 0)\right],
    \end{equation}
    and (\ref{bound2}) follows from the Cauchy-Schwartz inequality for the inner product $\langle O_1, O_2 \rangle \equiv \tr(\rho\, O_1^\dagger O_2)$.
    
    Finally if $C^i$ commute with each other, the first step (\ref{eq:firststep}) can be replaced with 
    \begin{equation}
        \frac{\partial \mathcal{C}_{O_1 O_2}(\bx, t; \rho)}{\partial \mu_i} = - \tr (\sqrt{\rho} \wtd{C}^i \sqrt{\rho} O^\dagger O),
    \end{equation}
    and the same proof goes through with $c^i$ as in (\ref{eq:commute}). It is bounded by $2 \sup \|c^i_\by\|$ because $\sqrt{\rho}\, O^\dagger O \sqrt{\rho}$ is a positive operator and for any operator $S$ and positive operator $T$, $|\tr S T| \leq \|S\| \tr\, T$.
\end{proof}

The theorem, as stated, seems complicated; but the physics is much clearer in terms of the velocity-dependent Lyapunov exponent (\ref{def:lyapunov}):
\begin{corollary}
    For $v_S(\bv; O_1, O_2, \xi)$ defined in (\ref{eq:defvs}),
    \begin{equation}
        \left|\frac{\partial \lambda_{O_1 O_2}(\bv; \rho)}{\partial \mu_i}\right| \leq \frac{2 c^i}{a} \Big(v_S(\bv; O_1, O_2, \xi) - \lambda_{O_1 O_2}(\bv; \rho) (\xi_\cor + \xi)\Big).
    \end{equation}
\end{corollary}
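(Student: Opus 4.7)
The plan is to deduce the corollary from inequality (\ref{bound}) of the main theorem by specializing to the ray $\bx = \bv t$, dividing both sides by $\mathcal{C}_{O_1 O_2}(\bv t, t;\rho)$, and passing to the late-time limit via the definition (\ref{def:lyapunov}) of the velocity-dependent Lyapunov exponent. The inequality (\ref{bound}) is essentially a $t$-dependent budget with two terms: an ``error'' term carrying the exponential $e^{\epsilon R/(\xi + \xi_\A)^2}\,e^{-\delta/(\xi_\cor + \xi_\A + \xi + \epsilon)}$, and a ``main'' term proportional to $(R + \delta + B)\,\mathcal{C}$. The strategy is to choose $\delta$ so that the error term becomes subleading on the scale $e^{\lambda t}$, leaving only the main term after the $t^{-1}$ rescaling used to form $\lambda$.

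Concretely, first I would divide (\ref{bound}) at $\bx = \bv t$ by $\mathcal{C}_{O_1 O_2}(\bv t, t;\rho)$ to obtain
\begin{equation}
    \left|\partial_{\mu_i}\ln \mathcal{C}_{O_1 O_2}(\bv t, t; \rho)\right| \leq E(t) + \frac{2 c^i (R(t) + \delta + B)}{a},
\end{equation}
where $E(t)$ collects the entire first line of (\ref{bound}) divided by $\mathcal{C}$. Dividing by $t$ and interchanging the $\mu_i$ derivative with the $t\to\infty$ limit (justified by the uniformity hypothesis already invoked in Appendix \ref{app:velocity}) converts the left-hand side into $|\partial_{\mu_i}\lambda_{O_1 O_2}(\bv;\rho)|$. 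I would then choose $\delta(t)$ linear in $t$ with slope strictly greater than
\begin{equation}
    (\xi_\cor + \xi_\A + \xi + \epsilon)\bigl(-\lambda_{O_1 O_2}(\bv;\rho) + \epsilon\, v_S/(\xi + \xi_\A)^2\bigr),
\end{equation}
which, combined with $\mathcal{C}_{O_1 O_2}(\bv t, t;\rho)\sim e^{\lambda t}$ and the definition (\ref{eq:defvs}) giving $R(t)/t \to v_S(\bv;O_1,O_2,\xi)$, ensures the exponential in $E(t)$ beats $1/\mathcal{C}$ by an arbitrarily small margin, sending $E(t)/t \to 0$ and also killing the polynomial prefactor $(1+2R/a)$ in (\ref{bound}).

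The surviving main term yields, after $t^{-1}$ limit, the bound $|\partial_{\mu_i}\lambda| \leq (2c^i/a)\bigl(v_S + (\xi_\cor+\xi_\A+\xi+\epsilon)(-\lambda) + O(\epsilon)\bigr)$. Finally I would send $\epsilon\to 0^+$ and invoke the remark after Theorem \ref{thm:araki} to take $\xi_\A \to 0^+$; the concomitant blow-up of $l_\A(\mu_i)$ is absorbed into the constant $B$, which is irrelevant after division by $t$. This collapses the right-hand side to $2c^i\bigl(v_S - \lambda(\xi_\cor+\xi)\bigr)/a$, as claimed.

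The main obstacle is the calibration of $\delta(t)$: its slope must be large enough to simultaneously overwhelm the $e^{-\lambda t}$ blow-up of $1/\mathcal{C}$, the polynomial $(1+2R/a)$ factor, and the $\epsilon R/(\xi+\xi_\A)^2$ exponent in $E(t)$, yet controlled enough that it contributes exactly the coefficient $\xi_\cor + \xi$ to the final bound after the $\xi_\A,\epsilon\to 0^+$ limits. Ensuring that these limits commute with $t\to\infty$ --- in particular that the $\xi_\A\to 0^+$ limit does not re-introduce uncontrolled growth through $l_\A(\mu_i)$ --- is the delicate bookkeeping that the estimates of Theorem \ref{thm:arakiexp} and Corollary \ref{cor:arakiexp} are tailored to make routine.
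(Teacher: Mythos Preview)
Your proposal is correct and follows essentially the same route as the paper's proof: divide (\ref{bound}) by $t\,\mathcal{C}$ along the ray $\bx=\bv t$, choose $\delta(t)$ with slope $(\xi_\cor+\xi_\A+\xi+\epsilon)\bigl(-\lambda + \epsilon v_S/(\xi+\xi_\A)^2 + \epsilon\bigr)$ to kill the error term, take $t\to\infty$ (assuming limit and derivative commute), and then send $\epsilon,\xi_\A\to 0$. The paper makes the identical choice of $\delta(t)$ explicitly and handles the $\xi_\A\to 0$ limit via the same remark after Theorem~\ref{thm:araki} that you cite.
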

\begin{proof}
    Divide both sides of (\ref{bound}) by $t\, \mathcal{C}_{O_1 O_2}(\bx, t; \rho)$, choose
    \begin{equation}
        \delta(t) = (\xi_\cor + \xi_\A + \xi + \e) \left[-\lambda_{O_1 O_2}(\bv; \rho) t + \e R / (\xi + \xi_\A)^2 + \e t\right] > 0,
    \end{equation} 
    $\bx = \bv t$ and take the limit $t \to \infty$ (assuming the limit and derivative commute):
    \begin{equation}
        \left|\frac{\partial \lambda_{O_1 O_2}}{\partial \mu_i}\right| \leq 2 c^i \left\{v_S + \left[\e + \e v_S / (\xi + \xi_\A)^2 - \lambda_{O_1 O_2}\right](\xi_\cor + \xi_\A + \xi + \e) \right\} / a.
    \end{equation}
    Finally let $\e, \xi_\A \to 0$ to conclude\footnote{Regarding the limit $\xi_\A \to 0$ we refer readers to the discussions following Theorem \ref{thm:araki}.}.
\end{proof}

The operator $O$ must decay at large distances at least as quickly as the rate set by $\xi_\LR$ (appearing in any triple $(v, \xi_\LR, C_\LR)$ with a Lieb-Robinson bound Theorem \ref{thm:LR}). Therefore we take $\xi = \xi_\text{LR}$ in the main text. We have already noted in section \ref{app:velocity} that this then defines a $v_S(\bv; \xi) \leq v$.

The coupling dependence of $\lambda_{O_1 O_2}(\bv; \rho)$ can be bounded in the same way:
\begin{corollary}
    If $\mathcal{C}_{ O_1 O_2}(\bv t, t; \rho) \sim \kappa_1^2 e^{\lambda_{O_1 O_2}(\bv; \rho) t}$ and
    \begin{equation}
    \mathcal{C}_{ [H^\alpha(-s), O_1] O_2}(\bv t, t; \rho) \sim \kappa_2^2 \|h^\alpha\|^2 e^{\lambda_{O_1 O_2}(\bv; \rho) t} \label{eq:asym}
    \end{equation}
    for $\kappa_1, \kappa_2 > 0$ and $\|h^\alpha\| \equiv \sup_{\by \in \L} \|h^\a_\by\|$ at $t \to \infty$, 
    \begin{equation} \label{ineq:boundJ}
        \left|\frac{\partial \lambda_{O_1 O_2}(\bv; \rho)}{\partial J_\a}\right| \leq \frac{2 \beta h^\a}{a} \Big(v_S(\bv; O_1, O_2, \xi) - \lambda_{O_1 O_2}(\bv; \rho) (\xi_\cor + \xi)\Big) + 2 \|h^\alpha\| \kappa_2 / \kappa_1.
    \end{equation}
\end{corollary}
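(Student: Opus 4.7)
The plan is to mirror the proof of the previous corollary, replacing (\ref{bound}) by the more general bound (\ref{bound2}). Divide both sides of (\ref{bound2}) by $t\,\mathcal{C}_{O_1 O_2}(\bv t, t;\rho)$, specialize to $\bx = \bv t$, and let $t\to\infty$ with the same choice
\begin{equation}
    \delta(t) = (\xi_\cor + \xi_\A + \xi + \e)\bigl[-\lambda_{O_1 O_2}(\bv;\rho)\,t + \e R/(\xi+\xi_\A)^2 + \e t\bigr] .
\end{equation}
By construction this kills the first term on the right of (\ref{bound2}) in the limit, so that contribution drops out.

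The second term of (\ref{bound2}) is structurally identical to the second term of (\ref{bound}) up to the replacement $c^i \mapsto \beta h^\alpha$. Since $R$ grows along the ray as $v_S(\bv)\,t$, the ratio $(R+\delta)/t$ tends to $v_S(\bv) - (\xi_\cor + \xi_\A + \xi + \e)\lambda_{O_1 O_2}$, and sending $\e,\xi_\A\to 0$ reproduces the first summand $2\beta h^\alpha\bigl(v_S(\bv) - (\xi_\cor + \xi)\lambda_{O_1 O_2}(\bv;\rho)\bigr)/a$ of (\ref{ineq:boundJ}) exactly as in the previous corollary.

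The only genuinely new ingredient is the Cauchy-Schwartz integral term of (\ref{bound2}), which after the division reads
\begin{equation}
    \frac{2}{t\,\mathcal{C}_{O_1 O_2}(\bv t, t;\rho)}\int_0^t ds\,\sqrt{\mathcal{C}_{O_1 O_2}(\bv t, t;\rho)\,\mathcal{C}_{[H^\alpha(-s),O_1]O_2}(\bv t, t;\rho)} .
\end{equation}
Plugging in the assumed asymptotic forms (\ref{eq:asym}), the square root evaluates to $\kappa_1\kappa_2\|h^\alpha\|e^{\lambda t}$ (independent of $s$); the integration over $s\in[0,t]$ then contributes a factor of $t$ that cancels the $1/t$ prefactor, and the remaining $\kappa_1^2 e^{\lambda t}$ in the denominator reduces the whole expression to $2\|h^\alpha\|\kappa_2/\kappa_1$, which is precisely the second summand of (\ref{ineq:boundJ}).

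The main subtle step is exchanging the $t\to\infty$ limit with the $s$-integration, since (\ref{eq:asym}) is a statement at fixed $s$ while the integral runs up to $s=t$. This requires a uniform large-$t$ estimate for the OTOC of $[H^\alpha(-s),O_1]$ with $O_2$. The uniformity is implicit in the way the hypothesis is phrased: the $\|h^\alpha\|$ on the right of (\ref{eq:asym}) reflects the quasi-local structure of $H^\alpha(-s) = \sum_\by h^\alpha_\by(-s)$ around its original support, controlled in turn by the Lieb-Robinson bound. As in the previous corollary, one also tacitly assumes that $\partial_{J_\alpha}$ commutes with $\lim_{t\to\infty}$.
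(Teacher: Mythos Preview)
Your proposal is correct and follows exactly the route the paper indicates (it simply says the coupling dependence ``can be bounded in the same way'' and gives no further details). You have in fact supplied more detail than the paper does, including the explicit evaluation of the Cauchy--Schwartz term and a careful remark on the uniformity needed to exchange the $t\to\infty$ limit with the $s$-integral.
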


If we assume that the growth rate of the OTOC does not depend on choices of operators, i.e., the growth rate in (\ref{eq:asym}) is $\lambda_{O_1 O_2}(\bv; \rho)$, the same as that of $\mathcal{C}_{ O_1 O_2}(\bv t, t; \rho)$, this corollary shows that divergence of $\partial_J v_B$ at zero temperature pinpoints quantum phase transitions at which the system becomes gapless. Indeed, if to the contrary the system is gapped, as observed in Fig.\,\ref{fig:dvb} and discussed in the main text, the first term on the right side of (\ref{ineq:boundJ}) is expected to vanish at zero temperature so the right-hand side of (\ref{ineq:boundJ}) should be finite, contradicting the divergence of $\partial_J v_B$ via an inequality similar to (\ref{ineq:lv}). Cusps of scrambling characteristics are indeed observed at quantum critical points in e.g. \cite{ling2017holographic, sahu2018scrambling}.

\section{Numerical details} \label{app:numerics}

Our method is a generalization of the Matrix Product Operator (MPO) approach to calculating the butterfly velocity, presented in \cite{xu2018accessing}, to finite temperature states. The algorithm is implemented with the ITensor library, with operators $O_1(0, t)$, $O_2(\bx, 0)$ and thermal density matrix $\rho$ represented as MPOs and evolved with a Time-Evolving Block Decimation (TEBD) method (for MPOs). For general quantum systems the thermal entanglement entropy is expected to be extensive. We find in practice that the MPO representation of thermal states works at sufficiently high but finite temperatures (in our case, $0 \leq \beta J \leq 3$). Numerical truncation $\e$ in the MPO is set to $\e = 10^{-14}$ and maximal bond dimension is denoted as $\chi = 256$. We will only investigate the mixed field Ising model with hopping $J$ and external fields $h_X$ and $h_Z$ as defined in (\ref{def:HIsing}), and probe the OTOC with Pauli $Z$ operators ($O_1 = O_2 = Z$ in (\ref{def:OTOC})). Scrambling characteristics are then determined by least-squares fitting of $\ln \mathcal{C}$ at the wavefront to the expression (\ref{eq:OTOCform}).

The wavefront is determined as follows. First, due to numerical truncation with $\e = 10^{-14}$ only data with $\ln \mathcal{C} > -22$ will be used. This delimits the right end $r$ of the wavefront; the default left end $l_0$ is then defined as the position where $\partial_x \ln \mathcal{C}$ is half the value at $r$. To eliminate the arbitrariness of $l_0$ a hyperparameter $\delta > 0$ is introduced and the left end $l \equiv r - (r - l_0) \delta$. When $\delta = 1$, $l = l_0$ and when $\delta = 0$, $l = r$; hence $\delta$ tunes the range of the wavefront, ending at $r$. 

\begin{figure}[!ht]
    \centering
    \includegraphics[width = 0.75\textwidth]{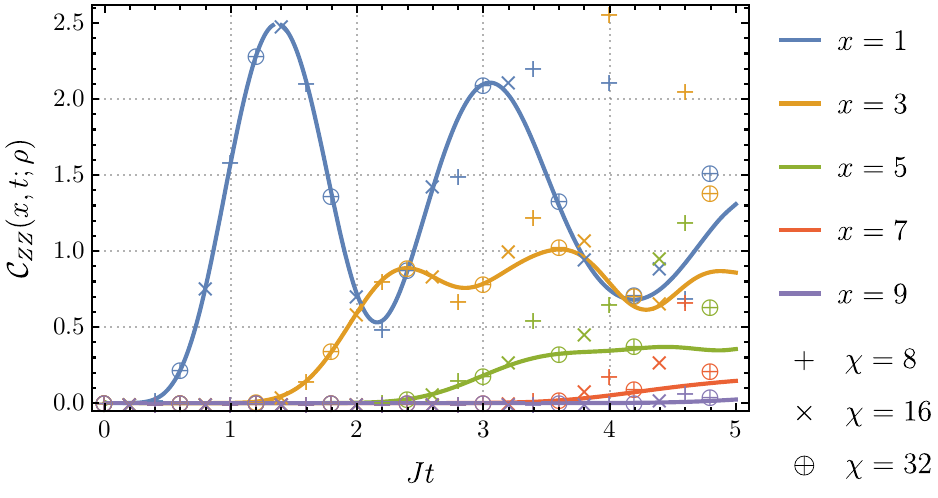}\vspace{0.3cm}
    \includegraphics[width = 0.75\textwidth]{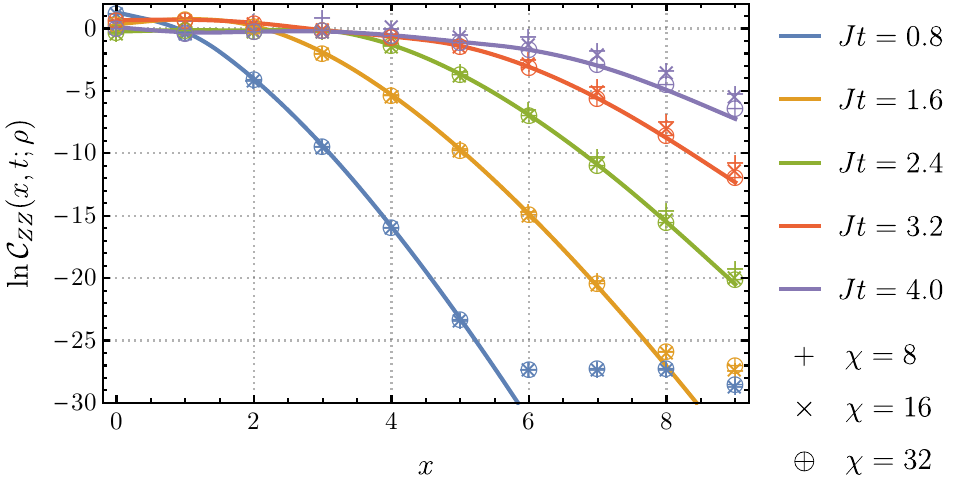}
    \caption{Comparison with Exact Diagonalization. The solid curves are from ED numerics in a mixed field Ising chain with $N = 10$, $h_X = 1.05 J$ and $h_Z = 0.5 J$ (see (\ref{def:HIsing}) for the Hamiltonian) and $\rho$ is the thermal state with $T = J$. In the first panel, each curve shows the time dependence of the OTOC at a fixed distance ($O_1 = Z_1$ and $O_2 = Z_{x + 1}$). For finite bond dimension truncations $\chi = 8, 16$ and 32, the MPO result agrees with ED at early times, and starts to deviate when the truncation is reached, which is near $J t = 2, 3$ and 4 respectively. In the second panel, each curve is a spatial profile of the OTOC at a fixed time. Propagation of a butterfly wavefront is clearly observed. For all $\chi$ the agreement with ED is remarkable until the MPO truncation $\e = 10^{-14}$ kicks in after $\ln \mathcal{C}$ drops to approximately $-25$. }
    \label{fig:ED}
\end{figure}

As a sanity check our implementation is verified against Exact Diagonalization (ED), which may be regarded as the MPO approximation with no bond dimension restrictions ($\chi = \infty$). The result is shown in Fig.\,\ref{fig:ED}. From the figure the MPO algorithm matches with ED perfectly at times before maximal bond dimension restriction is reached and starts to deviate afterwards. However, as shown in the figure, the wavefront dynamics is well captured by the MPO approximation, even after the bond dimension is saturated inside the butterfly cone. Such effectiveness of MPO (at least at infinite temperature) is observed in \cite{xu2018accessing} and explained by the fact that at the wavefront the operator $O_1(0, t)$ is less complex, so only a smaller bond dimension is necessary.

\begin{figure}[!ht]
    \centering
    \includegraphics[width=0.67\textwidth]{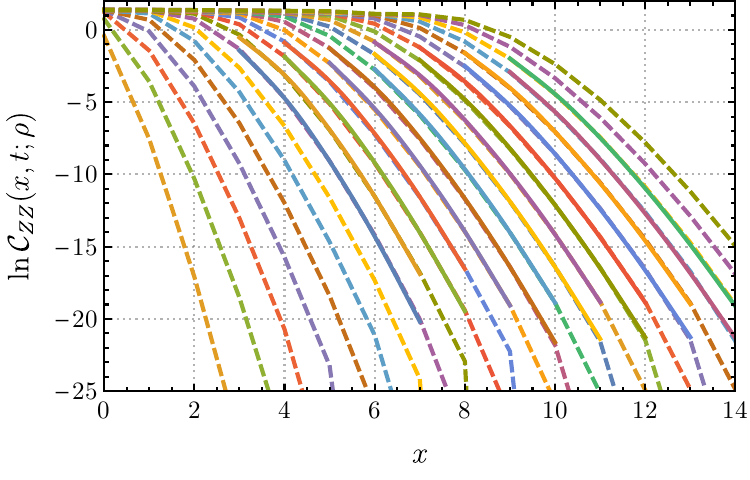}
    \includegraphics[width=0.67\textwidth]{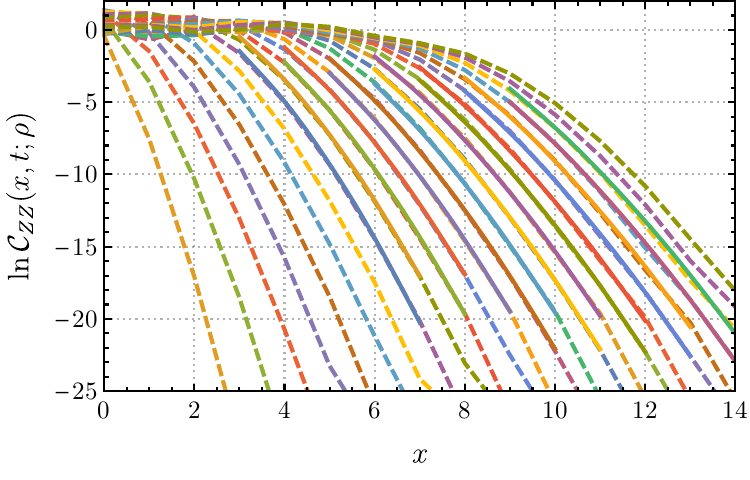}
    \caption{Examples of fitting. Dashed curves are from MPO numerics and fitting of (\ref{eq:OTOCform}) to wavefront is marked as solid. Each curve is $\ln \mathcal{C}$ for a fixed $J t = 0.2, 0.4, \ldots, 4.8$. The first plot is for $\beta = 0$ and $h_X = 1.05 J$, $h_Z = 0$ with a fitting $v_B = 1.95 J a$, $p = 0.46$ to be compared with exact values $v_B = 2 J a$ and $p = 0.5$ ($a = 1$ is the lattice spacing); the second plot is for $\beta J = 3$, $h_X = 1.05 J$, $h_Z = 0.3 J$ and the best fitting is $v_B = 1.39 J a$ with $p = 0.65$. }
    \label{fig:fit}
\end{figure}

A careful error analysis is necessary to extract reliable information from the nonlinear fit to the five parameters $(C, \l, x_0, v_B, p)$, appearing in (\ref{eq:OTOCform}). Here $C$ is the prefactor.  Three major causes of systematic errors are identified: finite bond dimension $\chi$, a finite time range $[t_0, t_1]$ of data and inaccuracy of the functional form (\ref{eq:OTOCform}). The convergence with respect to bond dimensions is verified: for all data used the difference in $\ln \mathcal{C}$ between $\chi = 256$ and $\chi = 512$ is less than $0.05$ and our main results do not depend on such a small difference. Also the fitting as presented in Fig.\,\ref{fig:fit} is visually reasonably good, even for the chaotic Hamiltonian $h_Z = 0.3 J$ at low temperature $\beta J = 3$.

The effect of a finite range of data and inaccuracy of the functional form is quantitatively manifested as dependence on the hyperparameters $\delta$ and $t_0$. Since the butterfly velocity is defined in the late time limit, $t_0$ should not be too small; but because only data up to time $t_1$ are available, $t_0$ cannot be arbitrarily large either. Moreover, larger $t_0$ means less data and more significant numerical instability. In Fig.\,\ref{fig:hyper}, dependence on $\delta$ and $t_0$ of the fitted butterfly velocity for $\beta J = 3$ and $h_Z = 0.4 J$ is shown. We will work with the values $\delta = 1.0$, $J t_0 = 1.5$ and $J t_1 = 4.4$.

\begin{figure}[!ht]
    \centering
    \includegraphics[width=0.8\textwidth]{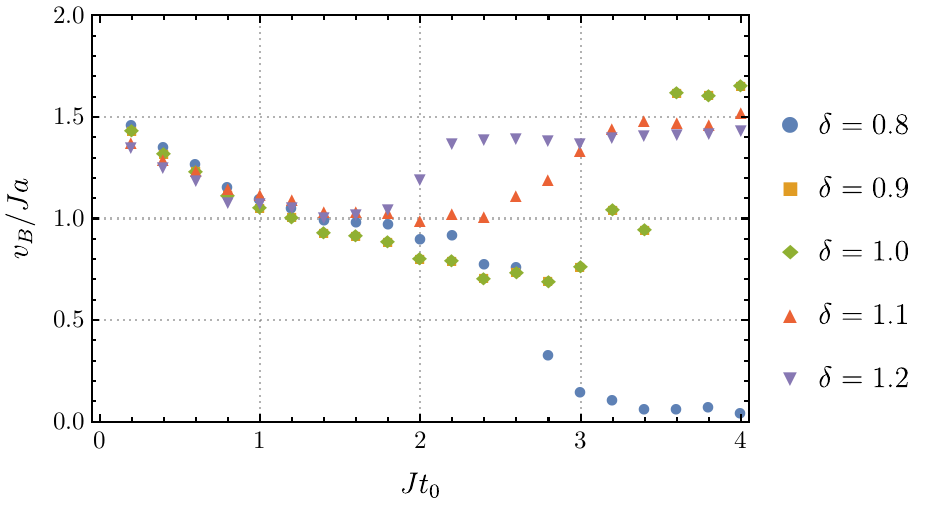}
    \caption{Fitted butterfly velocity at $h_X = 1.05 J$, $h_Z = 0.4 J$ and $\beta J = 3$ for different hyperparameters $\delta$ and $t_0$ ($J t_1 = 4.4$ and $a = 1$). For small $t_0$, fluctuation with respect to $\delta$ is insignificant due to a larger amount of data. However, at these early times there is a systematic error leading to a dependence on $t_0$.
    When $J t_0 > 2$ the fitting is not stable. The optimal choice of hyperparameters, from the figure, would be $J t_0 \approx 1.5$ with $\delta \approx 1.0$.}
    \label{fig:hyper}
\end{figure}

With this choice of hyperparameters, we produce the figures in the main text. Errors are estimated via slightly tuning hyperparameters. Details are summarized in Fig.\,\ref{fig:detail}, with fitted values of $p$ and $\lambda$ given as well. From the plot errors are estimated to be within a scale of $0.05$, $0.05$ and $0.5$ for $v_B(\beta) / v_B(0)$, $p$ and $\l / J$ respectively. 

\begin{figure}[!ht]
    \centering
    \includegraphics[width=0.8\textwidth]{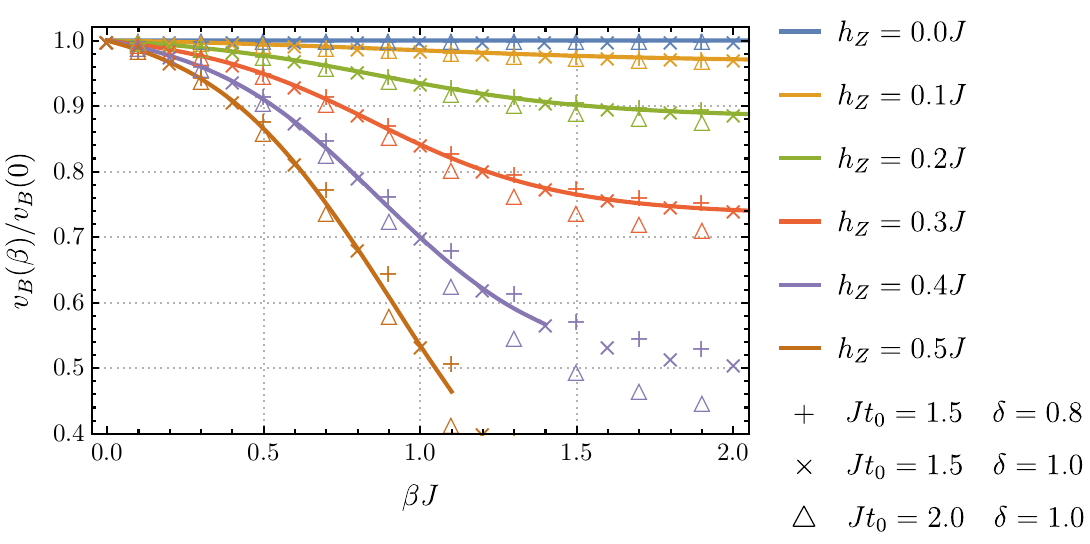}
    \includegraphics[width=0.8\textwidth]{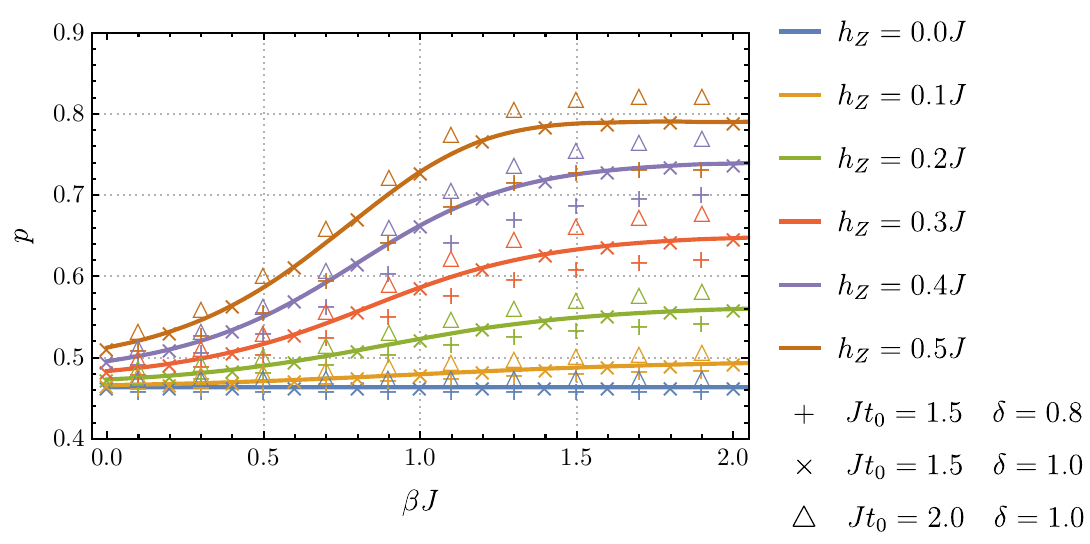}
    \includegraphics[width=0.8\textwidth]{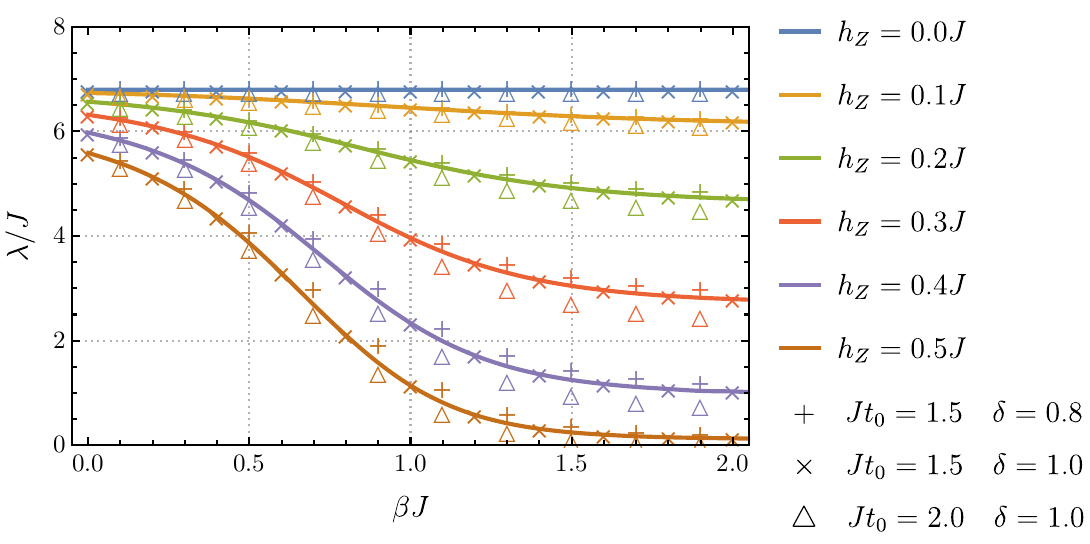}
    \caption{Scrambling characteristics in (\ref{eq:OTOCform}) fitted for numerics in mixed field Ising chain (\ref{def:HIsing}) with $h_X = 1.05 J$, different longitudinal field $h_Z$, inverse temperature $\beta$ and hyperparameters $t_0$ and $\delta$ (with $J t_1 = 4.4$). Solid curves are guides to the eye of fits at $J t_0 = 1.5$ and $\delta = 1.0$. }
    \label{fig:detail}
\end{figure}

The correlation length $\xi$ is extracted with MPO numerics as well, as the inverse spatial decay rate of connected two-point correlations $\tr (\rho Z_{15} Z_{15 + x}) - \tr (\rho Z_{15}) \tr (\rho Z_{15 + x})$ in an $N = 50$ chain with operator insertions at sites 15 and $15 + x$, where $x = 0, 1, \ldots, 20$. The exponential fit is remarkably good with correlation lengths at different temperatures and longitudinal fields shown in Fig.\,\ref{fig:corr}. Given the correlation length $\xi$ along with $p$ and $\l$ from Fig.\,\ref{fig:detail}, the bound is evaluated (with error estimates) in Fig.\,\ref{fig:dvb_detail}. In evaluating the inequality (\ref{ineq:lv}) we have used $v_S \leq v$ for $v = 3 J a$ and $\xi_\LR = a$ (cf. section \ref{app:velocity}), where a Lieb-Robinson inequality with $(v, \xi_\LR) = (3 J a, a)$ is verified in numerics and $a$ is the lattice spacing. 

\begin{figure}[!ht]
    \centering
    \includegraphics[width=0.75\textwidth]{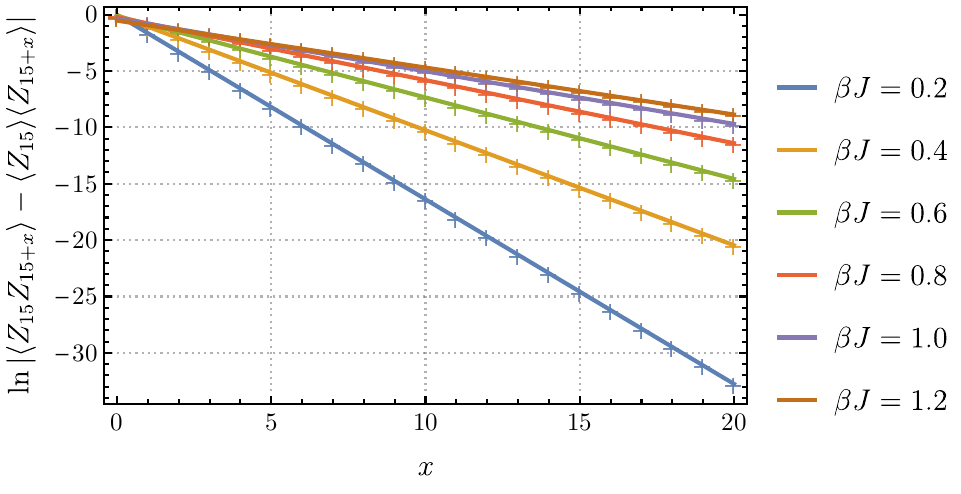}
    \includegraphics[width=0.75\textwidth]{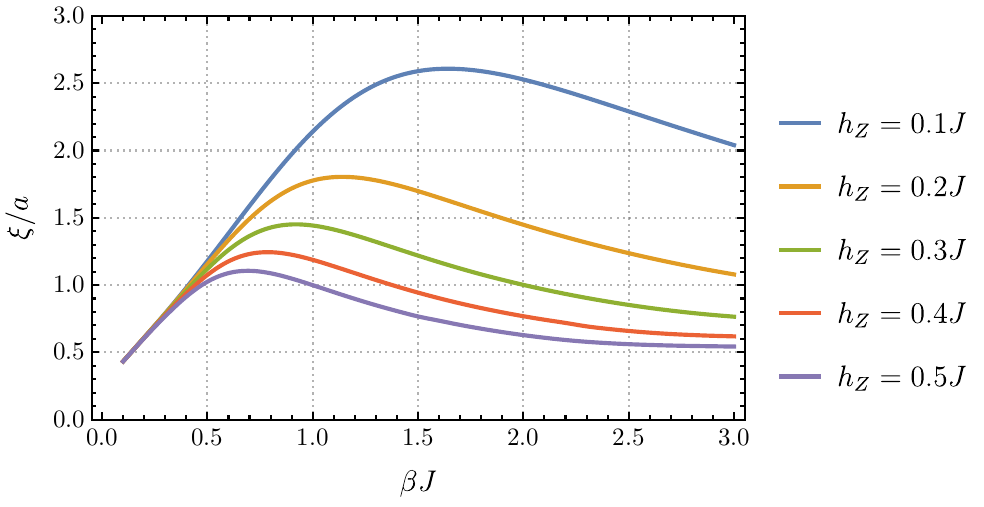}
    \caption{Lower plot: Correlation length $\xi$ for different inverse temperatures $\beta$ and longitudinal fields $h_Z$. $N = 50$, $h_X = 1.05 J$ in (\ref{def:HIsing}) and $a$ is the lattice spacing. Upper plot: As an example, details of fitting at $h_Z = 0.1 J$. $+$ are numerical data and lines are linear fitting. }
    \label{fig:corr}
\end{figure}

\begin{figure}
    \centering
    \includegraphics[width=0.75\textwidth]{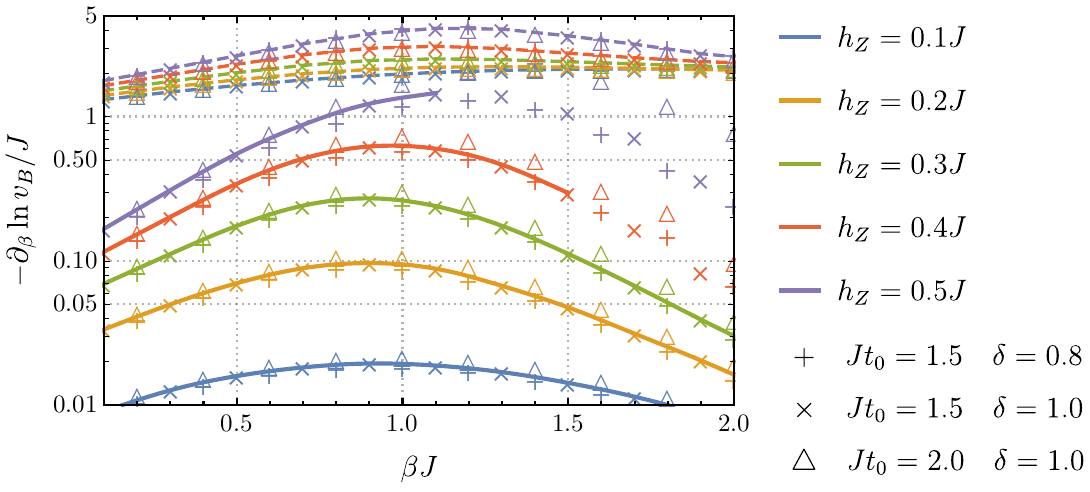}
    \caption{Temperature dependence of the butterfly velocity for different longitudinal fields $h_Z$ and hyperparameters $t_0$ and $\delta$ with $h_X = 1.05 J$. Upper bounds are evaluated according to (\ref{ineq:lv}) shown as the dashed lines in the top of the figure.}
    \label{fig:dvb_detail}
\end{figure}

\end{document}